\documentclass[11pt,letterpaper]{article}
\usepackage[margin=1in]{geometry}
\usepackage[T1]{fontenc}
\usepackage[utf8]{inputenc}
\usepackage{newtxtext,amsmath,amsthm,mathtools,newtxmath}
\usepackage{microtype,booktabs,listings,textcomp}
\usepackage{graphicx}
\usepackage{algorithm,algpseudocode,needspace}
\algrenewcommand{\algorithmicrequire}{\textbf{Input:}}
\algrenewcommand{\algorithmicensure}{\textbf{Output:}}
\usepackage[numbers,sort&compress]{natbib}
\usepackage[hidelinks]{hyperref}
\newtheorem{theorem}{Theorem}
\newtheorem{lemma}{Lemma}
\newtheorem{proposition}{Proposition}
\theoremstyle{remark}
\newtheorem{remark}{Remark}
\theoremstyle{plain}
\providecommand{\doi}[1]{\href{https://doi.org/#1}{doi: \nolinkurl{#1}}}
\allowdisplaybreaks[2]
\hypersetup{pdftitle={Approximating Prize-Collecting TSP below 1.556}}
\title{Approximating Prize-Collecting TSP below 1.556}
\author{%
Hong Li\\
\small School of Mathematics and Statistics, Yunnan University\\
\small \texttt{honglimath@126.com}
}
\date{}
\begin{document}
\maketitle

\begin{abstract}
The prize-collecting traveling salesperson problem is a variant of the metric traveling salesperson problem in which vertices may be left unvisited by paying their associated penalties. The objective is to minimize the length of the tour plus the total penalty of the unvisited vertices. Blauth, Klein, and N\"agele gave the previously best-known LP-relative $1.599$-approximation. We show that a simpler version of their algorithm, obtained by omitting the splitting-off preprocessing before the tree decomposition, has an LP-relative approximation ratio of $1.555761$. The improvement comes entirely from a new analysis of the parity-correction step: a simple analysis already gives $1.56$, and the stated factor follows from a numerical parameter search with exact verification.
\end{abstract}

\section{Introduction}
\subsection{Model and LP relaxation}
In the prize-collecting traveling salesperson problem (PCTSP), the input is a complete undirected graph $G=(V,E)$ with nonnegative edge lengths $c$ satisfying the triangle inequality, a root $r\in V$, and a nonnegative penalty $\pi_v$ for each $v\in V\setminus\{r\}$. A feasible solution is a tour containing $r$, and the objective is to minimize the length of the tour plus the total penalty of the vertices not visited by the tour. We allow the root-only tour and, for a tour visiting exactly two vertices, the same edge to be traversed twice.

We use $\pi_r:=0$ for convenience. For $S\subseteq V$, denote $\delta(S):=\{e\in E:|e\cap S|=1\}$; for $v\in V$, use $\delta(v):=\delta(\{v\})$. For a vector $z\in\mathbb R^E$ and an edge set $F\subseteq E$, write $z(F):=\sum_{e\in F}z_e$. The variable $x_e$ represents the fractional use of edge $e$, while $y_v$ represents the extent to which vertex $v$ is visited. The natural LP relaxation of PCTSP is
\begin{equation}\label{eq:lp}
\begin{aligned}
 \min\quad &\sum_{e\in E}c_ex_e+\sum_{v\in V}\pi_v(1-y_v)\\
 \text{subject to}\quad &x(\delta(v))=2y_v &&\forall v\in V\setminus\{r\},\\
 &x(\delta(r))\le2,\\
 &x(\delta(S))\ge2y_v &&\forall S\subseteq V\setminus\{r\},\ v\in S,\\
 &y_r=1,\\
 &x_e\ge0 &&\forall e\in E,\\
 &y_v\ge0 &&\forall v\in V.
\end{aligned}
\end{equation}
The constraints imply $y_v\le1$ for every $v\in V$. In the LP objective, $c^\top x=\sum_{e\in E}c_ex_e$ is the fractional edge-length term and $\sum_{v\in V}\pi_v(1-y_v)$ is the fractional penalty term. Let $(x^*,y^*)$ be an optimal solution of~\eqref{eq:lp}, and write $\mathrm{OPT}_{\mathrm{LP}}:=c^\top x^*+\sum_{v\in V}\pi_v(1-y_v^*)$ for its objective value. Every feasible PCTSP tour induces a feasible solution of~\eqref{eq:lp} with the same objective value, so $\mathrm{OPT}_{\mathrm{LP}}$ is a lower bound on the optimum PCTSP objective value. An LP-relative $\alpha$-approximation returns a tour with objective value at most $\alpha\mathrm{OPT}_{\mathrm{LP}}$. For each $v\ne r$, the cut constraints indexed by $v$ can be separated by a minimum $r$--$v$ cut; the remaining constraints are explicit. An optimal LP solution can be computed in polynomial time by the ellipsoid method~\cite{GLS1981}.

\subsection{Related work}
The metric traveling salesperson problem (TSP) is defined on a complete undirected graph with nonnegative edge lengths satisfying the triangle inequality and asks for a shortest tour visiting all vertices. Prize-collecting variants of TSP were first studied by Balas~\cite{Balas1989}. For the formulation of PCTSP considered here, Bienstock, Goemans, Simchi-Levi, and Williamson~\cite{BGW1993} gave the first constant-factor approximation, an LP-relative $2.5$-approximation based on threshold rounding. Goemans and Williamson~\cite{GW1995} subsequently gave a $2$-approximation through a primal-dual approach. The factor of $2$ was first beaten by Archer, Bateni, Hajiaghayi, and Karloff~\cite{ABHK2011}, who obtained an approximation ratio of approximately $1.979$. Goemans~\cite{Goemans2009} then combined randomized threshold rounding with the primal-dual algorithm to obtain a $1.91457$-approximation. Blauth and N\"agele~\cite{BN2023} later obtained an LP-relative $1.774$-approximation based on a refined decomposition of the LP solution into trees.

Blauth, Klein, and N\"agele (BKN)~\cite{BKN2026} obtained the previously best-known LP-relative $1.599$-approximation. The BKN algorithm starts with an optimal solution $(x^*,y^*)$ of~\eqref{eq:lp}. For a splitting threshold $\delta\in[0,1)$, the splitting-off preprocessing before the tree decomposition performs a complete splitting at every vertex $v$ with $y_v^*<\delta$~\cite[Theorem~3]{BKN2026}. Here, a complete splitting at $v$ is a sequence of splitting-off operations that reduces $y_v$ to zero. The preprocessing produces a feasible solution $(x',y')$ with $y_v'=0$ whenever $y_v^*<\delta$, $y_v'=y_v^*$ otherwise, and $c^\top x'\le c^\top x^*$. The algorithm then applies a tree decomposition to $(x',y')$. For each tree $T$ in the decomposition and a pruning threshold $\gamma\in[0,1]$, it removes edges from $T$ so that the remaining tree is the inclusion-wise minimal subtree spanning the root and all vertices $v$ of $T$ with $y_v'\ge\gamma$. This remaining subtree is called the core of $T$ at pruning threshold $\gamma$. The algorithm performs parity correction by adding a minimum-length perfect matching on the odd-degree vertices of the core. The resulting connected multigraph has even degree at every vertex; an Euler tour is then shortcut to obtain a tour on the vertices of the core. A first analysis with a fixed splitting threshold $\delta$ yields the golden ratio $(1+\sqrt{5})/2$, while a more refined randomized analysis of the splitting threshold $\delta$ and the pruning threshold $\gamma$ improves the guarantee to $1.599$. The BKN algorithm is deterministic; randomness is used only in the analysis. The algorithm enumerates the splitting threshold $\delta$, the trees in the corresponding decomposition, and the pruning threshold $\gamma$, then returns the tour with the least objective value.

Several parts of the BKN algorithm build on earlier work. The splitting-off preprocessing before the tree decomposition uses the classical splitting-off technique~\cite{Lovasz1976,Mader1978,Frank1992}; the related parsimonious property for TSP LP relaxations~\cite{GoemansBertsimas1993} was used in early work on PCTSP~\cite{BGW1993}. The tree-and-matching construction follows the Christofides--Serdyukov approach for TSP~\cite{Christofides1976,Serdyukov1978}; see~\cite{History2020} for its historical attribution. Variants of the same tree-and-matching framework also underlie the recent work that broke the long-standing $3/2$ barrier for TSP~\cite{KKO2024OR,KKO2022FOCS,Karlin2023ICALP,KKO2023}. The tree decomposition used by BKN was first used in the PCTSP setting by Blauth and N\"agele~\cite{BN2023}. Its proof is closely related to branching-packing results of Bang-Jensen, Frank, and Jackson~\cite{BJFJ1995} and the polynomial-time construction of Post and Swamy~\cite{PS2015}.

Several variants and special cases of PCTSP have also been studied. The prize-collecting stroll problem (PCS), the path version of PCTSP with two prescribed endpoints, has been considered in several works~\cite{ABHK2011,AKS2015,BKN2026,Li2026}. BKN~\cite{BKN2026} obtained an LP-relative $1.6662$-approximation for PCS. More recently, Li~\cite{Li2026} showed that, for every fixed $\varepsilon>0$, a polynomial-time $\rho$-approximation for PCTSP yields a polynomial-time $(\rho+\varepsilon)$-approximation for PCS. Improvements in the approximation ratio for PCTSP transfer to PCS up to an arbitrarily small additive loss. Alimi, M\"omke, and Ruderer~\cite{AMR2025} studied prize-collecting ordered TSP and prize-collecting multi-path TSP. For PCTSP in special metric spaces, polynomial-time approximation schemes are known for planar graph metrics~\cite{Planar2011} and metrics of bounded doubling dimension~\cite{CJJ2020}.

\subsection{Our results}

We consider a simpler version of the BKN algorithm, obtained by omitting the splitting-off preprocessing before the tree decomposition. Starting from an optimal LP solution $(x^*,y^*)$, the algorithm applies the tree decomposition directly to $(x^*,y^*)$ and then performs the same pruning and parity-correction steps as BKN. We state this algorithm as Algorithm~\ref{alg:bkn}. It is equivalent to the BKN algorithm with splitting threshold $\delta=0$, although no splitting threshold is needed in our formulation. Our improvement comes solely from a new analysis of the parity-correction step; no additional algorithmic step is introduced. We first obtain an LP-relative $1.56$-approximation through a simple analysis; see Theorem~\ref{thm:simple}. Numerical parameter search followed by exact verification gives the following stronger guarantee.

\begin{theorem}\label{thm:main}
Algorithm~\ref{alg:bkn} runs in polynomial time and returns a tour satisfying
\begin{equation}\label{eq:main}
 \mathrm{ALG}\le 1.555761\,\mathrm{OPT}_{\mathrm{LP}}
 \le 1.555761\,\mathrm{OPT},
\end{equation}
where $\mathrm{OPT}$ denotes the optimum value of the PCTSP instance and $\mathrm{ALG}$ denotes the objective value of the tour returned by Algorithm~\ref{alg:bkn}.
\end{theorem}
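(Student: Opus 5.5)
The plan is to bound the best tour over all enumerated choices by the expectation of a suitably randomized one. The randomization is over the tree $T$, drawn from the decomposition with its convex weights, and over the pruning threshold $\gamma$, drawn from a density $f$ on $[0,1]$ that we choose. Polynomial running time is routine. The LP is solved by the ellipsoid method, the tree decomposition has polynomially many trees and is constructed as in Post--Swamy, the only relevant values of $\gamma$ are the distinct values $y^*_v$, and minimum perfect matching and shortcutting are polynomial. Since Algorithm~\ref{alg:bkn} returns the minimum, it suffices to show
\[
\mathbb E[\text{tour cost}] \le 1.555761\,\Bigl(c^\top x^* + \sum_v \pi_v(1-y^*_v)\Bigr).
\]

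We use the tree decomposition properties stated earlier (from BN/BKN), taking as known that $x^*$ dominates a convex combination $\sum_T\lambda_T\chi^{E(T)}$ of trees rooted at $r$, in which each vertex $v$ lies in trees of total weight $y^*_v$.

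\emph{Penalty term.} Vertex $v$ is visited exactly when $v\in T$ and $y^*_v\ge\gamma$. Hence its expected penalty is
\[
\pi_v\Bigl(1-y^*_v F(y^*_v)\Bigr), \qquad F(t)=\Pr[\gamma\le t].
\]

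\emph{Core term.} The core is contained in $T$, so its expected length is at most $c^\top x^*$. We can sharpen this: an edge of $T$ survives pruning only if some vertex below it has $y^*\ge\gamma$, so the expected core length is bounded edge-wise in terms of $F$.

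\emph{Parity correction.} This is where the new analysis lives, and I expect it to be the main obstacle. The matching cost is bounded by the cost of a feasible point of the $O$-join dominant, where $O$ is the set of odd-degree vertices of the core. The candidate point is
\[
z=\alpha\,\chi^{\mathrm{core}}+\beta\,x^*,
\]
with $\alpha,\beta$ allowed to depend on $\gamma$; one could also take $\beta=\beta(\gamma)$ with correction only on cuts that separate core vertices. We check the $O$-cut condition $z(\delta(S))\ge1$ for $S$ with $|S\cap O|$ odd. Such a cut $S$ (or its complement) avoiding $r$ contains a core vertex $v$ with $y^*_v\ge\gamma$, so $x^*(\delta(S))\ge 2y^*_v\ge2\gamma$. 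If $S$ also cuts the core, it cuts at least one core edge, and then $z(\delta(S))\ge \alpha+2\beta\gamma$. The refinement is to exploit that cuts crossing the core exactly once behave differently from cuts crossing it at least twice. A cut crossing the core once already has $z(\delta(S))\ge\alpha+2\beta\gamma$. A cut crossing it at least twice has $z(\delta(S))\ge 2\alpha+2\beta\gamma$. This trade-off lets $\alpha$ be taken smaller than the BKN choice. Note also that no splitting-off is needed, because the cut bound uses only $y^*_v\ge\gamma$ for the core vertices themselves. Averaging over $T$ then gives an expected matching cost of at most
\[
\mathbb E\bigl[\alpha(\gamma)\cdot\text{core}+\beta(\gamma)\bigr]\,c^\top x^* .
\]

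\emph{Combining.} The total expected cost is $A\,c^\top x^*+\sum_v\pi_v\bigl(1-y^*_vF(y^*_v)\bigr)$. It suffices to show two things:
\[
A\le\rho \qquad\text{and}\qquad 1-tF(t)\le\rho(1-t)\ \text{ for all } t\in[0,1],\qquad \rho=1.555761 .
\]
For the simple $1.56$ bound (Theorem~\ref{thm:simple}), take $\gamma$ uniform on an interval $[a,1]$ together with a point mass, and verify both inequalities in closed form. For the stated constant, take $f$ piecewise constant on a fine grid, with the parameters found by numerical search. Then certify both inequalities exactly in rational arithmetic. The penalty inequality is checked on each grid interval, where it reduces to a low-degree polynomial inequality. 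The constant $A$ is an explicit rational sum.
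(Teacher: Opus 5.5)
\textbf{There is a genuine gap in your parity-correction step, and it is exactly the obstacle the paper is built to overcome.}

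Your claim that every cut $S\not\ni r$ with $|S\cap O|$ odd contains a core vertex with $y^*_v\ge\gamma$ is false. The odd-degree vertices of the core include Steiner vertices of degree at least three. These are kept only for connectivity and can have arbitrarily small $y^*_v$. For $S=\{v\}$ with such a $v$, three core edges cross the cut, but $x^*(\delta(S))=2y^*_v\approx0$.

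What the LP actually gives is $x^*(\delta(S))\ge 2h(S)$ with $h(S)=\max_{u\in S}y^*_u$. This is guaranteed to be at least $2\gamma$ only when a single core edge crosses, because then $S$ contains a whole subtree cut off from $r$, and hence a core leaf. So your remark that ``no splitting-off is needed'' does not follow. The uniform lower bound $2\delta$ on multiply-crossed cuts is precisely what BKN's splitting-off supplies.

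Once the cut argument is corrected, consider any vector $\beta x^*+\sum_e\alpha_e\chi^{\{e\}}$ whose coefficients depend only on $h_T(e)$ and the parameters. It needs $\alpha_e\ge 1/3$ for edges with $h_T(e)=1$, since three such edges can meet at a Steiner vertex of tiny $y^*$. It also needs $\alpha_e+2\beta\ge1$. Together these force such an edge to have total coefficient $1+\alpha_e+\beta\ge 5/3$ for every $\gamma$. So your analysis cannot certify anything below $5/3$, which is the value of BKN's bound at $\delta=0$. Relatedly, the length bound is not a single rational constant $A$: edges with different $h_T(e)$ get different coefficients, so you must bound a function of $y=h_T(e)$ on all of $[0,1]$.

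\textbf{The missing idea is the paper's auxiliary term.} For thresholds $t\in[0,s]$, the core $R$ is pruned further to $K_t=\operatorname{core}(R,t)$. Each odd-degree vertex of $K_t$ outside $U_t=\{v:y^*_v\ge t\}$ is then joined to its nearest vertex of $U_t$.

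For $t>h(S)$, the set $S$ misses $U_t$. So an odd number of $K_t$-edges crossing $\delta(S)$ forces an auxiliary edge across the cut. This improves the count $m(S)$ to $m(S)-1$ for $t\in(h(S),s]$. Combined with $x^*(\delta(S))\ge 2h(S)$ and core coefficients $1-2\lambda\min\{h_R(e),a\}$ with $\lambda=1/(3a-s)$, it yields $z_R(\delta(S))\ge1$ even when $h(S)$ is tiny.

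The expected length of the auxiliary edges is controlled by the potential $\Phi_t(H)=\sum_{v\ne r}f_t(v)(\deg_H(v)-2)$. Using $\mathbb{E}_T\deg_T(v)\le 2y^*_v$, this gives $\mathbb{E}_T c(I_t(K_t))\le\mathbb{E}_T[c(T)-c(W_{T,t})]$.

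Reaching $1.555761$ requires all of these ingredients. On top of them, the paper uses a correlated choice of $(\gamma,s,a)$ with $F(y)=\rho-(\rho-1)/y$ on $[(\rho-1)/\rho,1]$, and an exact check that $\Psi(y)<\rho$ for all $y\in[0,1]$.

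A minor point: a vertex with $y^*_v<\gamma$ may still lie in the core. Your penalty expression is therefore an upper bound rather than an equality, which is harmless.
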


The same approximation guarantee also holds for the BKN algorithm. In their derandomization, BKN enumerate the splitting thresholds in $\{y_v^*:v\in V\}$, obtained from the original optimal LP solution $(x^*,y^*)$~\cite[Section~4, proof of Theorem~1]{BKN2026}. Taking $\delta=\min_{v\in V}y_v^*$ causes no vertex to satisfy $y_v^*<\delta$, so the splitting-off preprocessing does nothing and this case is exactly the no-splitting case analyzed here. If $\min_{v\in V}y_v^*=1$, then all vertices satisfy $y_v^*=1$, and the preprocessing is again vacuous, exactly as for $\delta=0$. Since the BKN algorithm returns the best tour over the enumerated splitting thresholds, it inherits the same $1.555761$ guarantee. BKN reported computational evidence suggesting that an analysis following their proof could not achieve a ratio of $1.59$~\cite[Remark~1]{BKN2026}. Our result shows that a substantially better guarantee can be obtained through a different analysis.

We also study how far the analysis can be improved by changing only the parameter distribution. The inequalities used to bound the expected tour length and expected penalty in Lemma~\ref{lem:scalar} cannot certify a factor below approximately $1.555623$, even when arbitrary joint distributions over the full parameter range are allowed. Our guarantee is within $1.38\cdot10^{-4}$ of the best factor obtainable from those inequalities; see Remark~\ref{rem:lower} and Appendix~\ref{app:lower}.

\subsection{Why the analysis improves}\label{sec:why}

\paragraph{Parity correction.} The objective value of a tour consists of its length and the penalties of the vertices it omits. Once a core is fixed, the penalty term is determined. Shortcutting an Euler tour of the multigraph formed by the core and the matching used for parity correction gives a tour whose length is at most the length of the core plus the length of that matching. The key issue is therefore how to bound the length of this matching after pruning. For a graph or multigraph $H$, write $V(H)$ and $E(H)$ for its vertex set and edge multiset, and let $\operatorname{odd}(H)$ be its set of odd-degree vertices. For an edge set or multiset $F$, write $c(F):=\sum_{e\in F}c_e$, counting multiplicities, and let $\chi^F$ be its incidence vector, also counting multiplicities. Write $c(H):=c(E(H))$.

For an even-cardinality set $Q\subseteq V$, a $Q$-join is an edge multiset whose odd-degree vertices are exactly $Q$. In our complete metric graph, the minimum $Q$-join length equals the length of a minimum-length perfect matching on $Q$. The algorithm computes a minimum-length perfect matching, while the analysis may bound its length using $Q$-joins. The $Q$-join dominant is
\begin{equation}\label{eq:join}
 P_Q^\uparrow=\left\{z\in\mathbb{R}_{\ge0}^E:z(\delta(S))\ge1\quad\forall S\subseteq V\text{ with }|S\cap Q|\text{ odd}\right\}.
\end{equation}
The integrality theorem of Edmonds and Johnson~\cite{EJ1973} implies that every vector $z\in P_Q^\uparrow$ gives an upper bound $c^\top z$ on the minimum $Q$-join length, and hence on the length of the matching used for parity correction. When $Q$ is the set of odd-degree vertices of the core, the analysis seeks a vector $z\in P_Q^\uparrow$ for which $c^\top z$ is small.

For a core $R$, set $Q=\operatorname{odd}(R)$. The BKN analysis constructs such a vector from the edge vector $x'$ of the LP solution after the splitting-off preprocessing and the edges of $R$~\cite[proof of Lemma~2]{BKN2026}. For an edge $e\in E(R)$, let $h'_R(e)$ be the maximum of $y_v'$ over the vertices in the connected component of $(V(R),E(R)\setminus\{e\})$ not containing $r$. The vector is
\begin{equation}\label{eq:bkn-vector}
 z^{\mathrm{BKN}}=\frac{x'}{3-\delta}+\sum_{e\in E(R)}\left(1-\frac{2h'_R(e)}{3-\delta}\right)\chi^{\{e\}}.
\end{equation}
The coefficient $1/(3-\delta)$ of $x'$ and the coefficients $1-2h'_R(e)/(3-\delta)$ on the edges of $R$ must together ensure the cut inequalities in~\eqref{eq:join}. Subject to these inequalities, smaller coefficients lead to a smaller value of $c^\top z^{\mathrm{BKN}}$ and can therefore improve the resulting LP-relative upper bound on the cost of parity correction.

\paragraph{Bottleneck of the BKN analysis.} To see both why the splitting-off preprocessing is useful in the BKN analysis and why it becomes a bottleneck for further improvement, fix a set $S$ with $r\notin S$ and $|S\cap Q|$ odd. Since $|\delta(S)\cap E(R)|\equiv |S\cap Q|\pmod 2$, the number of edges of $R$ crossing the cut is odd. To prove that the vector in~\eqref{eq:bkn-vector} belongs to the $Q$-join dominant $P_Q^\uparrow$, BKN consider separately cuts crossed by exactly one edge of $R$ and cuts crossed by at least three edges of $R$. If $e$ is the unique crossing edge, then $x'(\delta(S))\ge2h'_R(e)$, which gives $z^{\mathrm{BKN}}(\delta(S))\ge2h'_R(e)/(3-\delta)+1-2h'_R(e)/(3-\delta)=1$. If at least three edges of $R$ cross the cut, the argument uses the lower bound $x'(\delta(S))\ge2\delta$ provided by the splitting-off preprocessing and the bound $h'_R(e)\le1$ for the crossing edges, yielding $z^{\mathrm{BKN}}(\delta(S))\ge2\delta/(3-\delta)+3(1-2/(3-\delta))=1$. Thus, taking a positive $\delta$ allows the coefficients $1-2h'_R(e)/(3-\delta)$ on the edges of $R$ to be smaller while the vector remains in $P_Q^\uparrow$, although the coefficient $1/(3-\delta)$ of $x'$ becomes larger. The decrease in the coefficients on the edges of $R$ can outweigh this increase and yield a smaller overall upper bound on the cost of parity correction.

The bottleneck for further improvement comes from the penalty term. By the vertex-marginal property of the tree decomposition (see Lemma~\ref{lem:decomp}), a vertex with $y_v'=0$ appears in no tree in the support of the decomposition. Every vertex with $y_v^*<\delta$, which is completely split by the preprocessing, is absent from every subsequent core and every resulting tour. Its full penalty is paid. For a fixed splitting threshold $\delta$, the full penalty $\pi_v$ is at most $1/(1-\delta)$ times its fractional penalty $\pi_v(1-y_v^*)$. To bound the penalties of these omitted vertices uniformly over all $y_v^*<\delta$ by $\rho$ times their fractional penalties, we require $\delta\le1-1/\rho$; for $\rho=1.599$, this gives $\delta\lesssim0.3746$. In their refined analysis, BKN improve the bound by randomizing $\delta$: they sample $\delta\in[\kappa_0,\kappa]$ with density $f(\delta)=\nu(3-\delta)(\kappa-\delta)^{2.2}$, where $\nu$ is the normalizing constant, and choose $\kappa_0=0.3724$ and $\kappa=0.9971$ for their $1.599$ guarantee~\cite[proof of Theorem~1]{BKN2026}. Vertices with $y_v^*<\kappa_0$ are completely split for every realization of $\delta$, giving the factor $1/(1-\kappa_0)$ appearing in the penalty bound~\cite[Equation~(8)]{BKN2026}. The value $\kappa_0=0.3724$ is already close to the upper limit $1-1/1.599\approx0.3746$ imposed by this penalty term. On the other hand, taking $\delta=0$ avoids the penalty loss caused by splitting, but makes the lower bound $x'(\delta(S))\ge2\delta$ vacuous for cuts crossed by at least three edges of $R$. With this analysis, the bound in~\cite[Theorem~4]{BKN2026} evaluates to $5/3$ at $\delta=0$.

\paragraph{Our mechanism.} We keep the original optimal LP solution $(x^*,y^*)$ and construct a different vector $z\in P_Q^\uparrow$; see~\eqref{eq:certificate}. For each relevant cut $S$, we combine a cut-specific lower bound on $x^*(\delta(S))$, determined by the largest $y_v^*$ on the side not containing the root, with parity information exposed by further pruning the core $R$. This parity information is captured by auxiliary edges introduced below: they compensate for part of the loss in the cut inequality when the coefficients on the edges of $R$ are reduced, while a potential argument controls their expected length. This allows us to reduce the coefficients on the edges of $R$ while controlling the added cost of the auxiliary edges, yielding a stronger upper bound on the cost of parity correction without requiring a positive splitting threshold.

\paragraph{Auxiliary threshold.} We introduce an auxiliary threshold $t\in[0,1]$, analogous to the pruning threshold $\gamma$. For any tree $H$ containing $r$, let $\operatorname{core}(H,t)$ be the inclusion-wise minimal subtree of $H$ spanning $r$ and all vertices $v\in V(H)$ with $y_v^*\ge t$. Let $T$ be a tree in the decomposition of $(x^*,y^*)$, let $R=\operatorname{core}(T,\gamma)$ be the core used by the algorithm, and set $Q=\operatorname{odd}(R)$. Define $U_t:=\{v\in V:y_v^*\ge t\}$ and $K_t:=\operatorname{core}(R,t)$. Thus, $K_t$ is the part of the fixed core $R$ needed to connect the root to the vertices of $R$ with $y_v^*\ge t$. Unlike the pruning threshold $\gamma$, the auxiliary threshold $t$ and the auxiliary edges introduced below are used only in the analysis of the fixed core $R$.

Fix a cut $S$ with $|S\cap Q|$ odd and, by complementing $S$ if necessary, assume $r\notin S$. Let $h(S):=\max_{v\in S}y_v^*$ and $m_t(S):=|\delta(S)\cap E(K_t)|$. Thus, $h(S)$ is the largest $y_v^*$ among the vertices of $S$, while $m_t(S)$ is the number of edges of $K_t$ crossing the cut. The cut constraints in~\eqref{eq:lp} give $x^*(\delta(S))\ge2h(S)$, and $S\cap U_t=\varnothing$ whenever $t>h(S)$. Hence, each such cut provides two pieces of information: the lower bound $2h(S)$ on $x^*(\delta(S))$ and, for each auxiliary threshold $t$, the number $m_t(S)$ of edges that remain across the cut after further pruning. For cuts crossed by at least three edges of $R$, we use the lower bound $2h(S)$, which is specific to the cut, in place of the uniform lower bound $2\delta$ used in the BKN analysis, while $m_t(S)$ gives additional information beyond the fact that at least three edges of $R$ cross the cut.

\paragraph{Auxiliary edges.} To use the parity of $m_t(S)$ in constructing $z$, for every odd-degree vertex $v\in V(K_t)\setminus U_t$, choose a vertex $u\in U_t$ minimizing $c_{uv}$ and add $\{u,v\}$ as an auxiliary edge. Let $I_t(K_t)$ be the multiset of these auxiliary edges. By construction, $I_t(K_t)$ and $K_t$ have the same set of odd-degree vertices outside $U_t$. For $t>h(S)$, if $m_t(S)$ is odd, then at least one edge of $I_t(K_t)$ crosses the cut.

We use these auxiliary edges in constructing the vector $z$; see~\eqref{eq:certificate}. In addition to terms involving $x^*$ and the edges of $R$, the vector contains a nonnegative combination of the incidence vectors $\chi^{I_t(K_t)}$ over auxiliary thresholds $t\in[0,s]$, where $s\in[0,1]$ determines the range of auxiliary thresholds included in the construction. These incidence vectors turn the parity of $m_t(S)$ into value in the cut inequalities in~\eqref{eq:join}.

Figure~\ref{fig:core-auxiliary} illustrates the core $R$, the subtree $K_t$, and the auxiliary edge chosen for a vertex in $V(K_t)\setminus U_t$.

\begin{figure}[!htbp]
\centering
\includegraphics[width=0.75\linewidth]{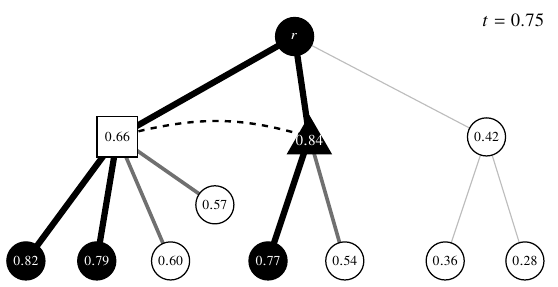}
\small
\caption[The core, auxiliary threshold, and auxiliary edge.]{The tree $T$, the core $R=\operatorname{core}(T,0.50)$, and $K_t=\operatorname{core}(R,t)$ for $t=0.75$. The numbers at the vertices are the values $y_v^*$, with $y_r^*=1$. Light gray edges belong to $E(T)\setminus E(R)$, medium gray edges to $E(R)\setminus E(K_t)$, and thick black edges to $E(K_t)$. Filled vertices belong to $U_t$. The square is an odd-degree vertex $v_1\in V(K_t)\setminus U_t$, and the filled triangle is the chosen vertex $u\in U_t$ minimizing $c_{uv_1}$. The dashed curve is the auxiliary edge $\{u,v_1\}$.}
\label{fig:core-auxiliary}
\end{figure}

For the cut $S$ fixed above, let $m(S):=|\delta(S)\cap E(R)|$. Since $|S\cap Q|$ is odd, $m(S)$ is odd. When $t>h(S)$ and $m_t(S)$ is odd, at least one auxiliary edge crosses the cut. For $h(S)<t\le s$, this parity information allows the relevant bound in the formal cut analysis to improve from the trivial value $m(S)$ to $m(S)-1$. Combined with the cut-specific lower bound $x^*(\delta(S))\ge2h(S)$, this is what permits smaller coefficients on the edges of $R$ while preserving the required cut inequalities for $z$.

The auxiliary edges may have positive length, so we must also bound their contribution to $c^\top z$. For fixed thresholds $\gamma$ and $t$, view the tree decomposition as a probability distribution over its trees, with $\mathbb{E}_T$ denoting expectation over the choice of $T$. Using a potential function, we obtain
\begin{equation}\label{eq:repair-overview}
 \mathbb{E}_T c(I_t(K_t))
 \le
 \mathbb{E}_T\bigl[c(T)-c(\operatorname{core}(T,t))\bigr].
\end{equation}
Thus, the expected length of the auxiliary edges is bounded by the expected length of the edges removed from $T$ when forming $\operatorname{core}(T,t)$. Together, the value supplied by the auxiliary edges across the relevant cuts and this bound on their expected length yield the stronger bound on the cost of parity correction without requiring a positive splitting threshold, so the splitting-off preprocessing can be omitted.

\paragraph{Overview.} Section~\ref{sec:bkn} describes the algorithm and the tree decomposition on which it is based. Section~\ref{sec:analysis} bounds the expected length of the auxiliary edges, constructs the vector used to bound the cost of parity correction, and proves the required cut inequalities. Section~\ref{sec:ratio} combines the resulting bound on the tour length with the expected penalty. Section~\ref{sec:simple-guarantee} proves the factor $1.56$ through a simple analysis, and Section~\ref{sec:explicit-distribution} improves it to $1.555761$ using numerical parameter search and exact verification. Section~\ref{sec:conclusion} concludes the paper and discusses the limitation of the present analysis.

\section{The algorithm}\label{sec:bkn}

We now give a formal description of the simplified BKN algorithm introduced in the previous section. For the description and analysis of the algorithm, we use the following notation. For a graph or multigraph $H$, let $\deg_H(v)$ be the degree of $v$, set to zero when $v\notin V(H)$. For $S\subseteq V$, write $\pi(S):=\sum_{v\in S}\pi_v$. For a tour $\tau$, let $V(\tau)$ be the set of vertices it visits and let $c(\tau)$ be its length. For distinct vertices $u,v$, write $c_{uv}:=c_{\{u,v\}}$, and set $c_{vv}:=0$.

\subsection{Tree decomposition}

The algorithm first computes an optimal solution $(x^*,y^*)$ of~\eqref{eq:lp}. We then apply the following tree decomposition to $(x^*,y^*)$ to obtain a family of trees; the decomposition was stated by Blauth and N\"agele~\cite[Lemma~12]{BN2023} and BKN give a proof of the form stated below~\cite[Lemma~1 and Section~6]{BKN2026}.

\begin{lemma}[Tree decomposition \cite{BN2023,BKN2026}]\label{lem:decomp}
Given a rational feasible solution $(x,y)$ of~\eqref{eq:lp}, one can compute in polynomial time a family $\mathcal T$ of polynomial size consisting of trees containing $r$, and rational probabilities $\mu_T$, such that
\begin{equation}\label{eq:decomp}
 \sum_{T\in\mathcal T}\mu_T=1,\qquad
 \sum_{T\in\mathcal T}\mu_T\chi^{E(T)}\le x,\qquad
 \sum_{T\in\mathcal T:\,v\in V(T)}\mu_T=y_v\quad\forall v\in V.
\end{equation}
\end{lemma}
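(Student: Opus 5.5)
We prove Lemma~\ref{lem:decomp} by reducing it to a packing of rooted arborescences in an auxiliary digraph, as in the branching-packing results of Bang-Jensen, Frank, and Jackson~\cite{BJFJ1995}, made constructive by Post and Swamy~\cite{PS2015}. Since every quantity can be scaled by a common denominator, assume $(x,y)$ is rational and let $M$ be a positive integer such that $Mx_e/2$ and $My_v$ are integral for all $e\in E$ and $v\in V$. Form the digraph $D$ on $V$ by replacing each edge $e=\{u,w\}$ with $Mx_e/2$ parallel arcs in each direction. For every $v\ne r$ and every $S\subseteq V\setminus\{r\}$ with $v\in S$, the cut constraint gives $\rho_D(S)=Mx(\delta(S))/2\ge My_v$. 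Hence $\lambda_D(r,v)\ge My_v$ for every $v\neq r$, while every $v\ne r$ has in-degree exactly $Mx(\delta(v))/2=My_v$. Note that the connectivity requirement $My_v$ can be strictly smaller than the global root-connectivity, so Edmonds' branching theorem does not apply directly.

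The key step is to invoke the Bang-Jensen--Frank--Jackson theorem on preserving local arc-connectivity by arc-disjoint arborescences. In the form we need, it states the following: if $\lambda_D(r,v)\ge k_v$ for all $v$, then $D$ contains $k:=\max_v k_v$ arc-disjoint $r$-arborescences (not necessarily spanning) such that every $v$ lies in at least $k_v$ of them. Apply it with $k_v=My_v$ and $k=M$, the latter because $y_r=1$. Since the in-degree of $v\ne r$ equals $My_v$ and each arborescence containing $v\neq r$ uses one arc entering $v$, the vertex $v$ lies in \emph{exactly} $My_v$ of the arborescences. Undirect each arborescence to obtain a tree containing $r$, and assign it probability $1/M$.

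We then verify~\eqref{eq:decomp}. The probabilities sum to one because there are exactly $M$ arborescences. Vertex $v$ is covered with total probability $y_v$, and $y_r=1$. For the edge bound, the $Mx_e$ arcs of $e$ are used disjointly, so $e$ appears in at most $Mx_e$ trees; in fact at most $Mx_e/2$ trees use each direction of $e$. Dividing by $M$ gives $\sum_{T\in\mathcal T}\mu_T\chi^{E(T)}\le x_e$.

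The main obstacle is polynomiality: $M$ can be exponential, so we cannot list $M$ arborescences explicitly. We would follow Post and Swamy~\cite{PS2015}. They use a splitting-off-based inductive proof of the arborescence-packing theorem and maintain arborescences with fractional weights, merging identical ones. This yields a polynomial number of distinct trees with rational weights $\mu_T$ satisfying the same equalities and inequalities. In the paper we would not reprove this construction but cite~\cite[Lemma~1 and Section~6]{BKN2026} for the algorithmic version. The only model-specific check is that the cut constraints of~\eqref{eq:lp} give exactly the local connectivity hypothesis above.
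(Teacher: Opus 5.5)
Your proposal is correct and matches the paper's approach. The paper does not reprove Lemma~\ref{lem:decomp}; it cites Blauth and N\"agele and BKN, and says the proof is closely related to the Bang-Jensen--Frank--Jackson packing theorem and the polynomial-time construction of Post and Swamy, which is exactly your bidirect-and-pack argument (BKN's own proof, as described in Remark~\ref{rem:shared-decomposition}, completely splits off every vertex other than $r$ and an auxiliary root copy $r'$ and then reverses these splittings).

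One correction: the Bang-Jensen--Frank--Jackson theorem requires every $v\neq r$ to have in-degree at least its out-degree in $D$, and the hypothesis-free version you state is false. For a counterexample, take arcs $r\to u,w,z$, $u\to a,b$, $w\to b,c$, $z\to c,a$. Then $\lambda_D(r,a)=\lambda_D(r,b)=\lambda_D(r,c)=2$, yet no two arc-disjoint $r$-arborescences both contain $a,b,c$. Each of $a,b,c$ would need its two in-neighbours in different arborescences, and the in-degree-one vertices $u,w,z$ cannot be split pairwise between only two arborescences. So state the theorem with this hypothesis, and note that it holds here because your $D$ is symmetric.
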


Apply Lemma~\ref{lem:decomp} to $(x^*,y^*)$. For the analysis, let $\mu$ denote the distribution that assigns probability $\mu_T$ to each $T\in\mathcal T$, and write $T\sim\mu$. We write $\mathbb{P}_T$ and $\mathbb{E}_T$ for probability and expectation over $T\sim\mu$. Equation~\eqref{eq:decomp} and the LP degree constraints give
\[
 \mathbb{P}_T[v\in V(T)]=y_v^*,\qquad
 \mathbb{E}_T c(T)\le c^\top x^*,\qquad
 \mathbb{E}_T\deg_T(v)\le2y_v^*\quad\forall v\ne r.
\]

\subsection{Cores and parity correction}

We apply the following pruning and parity-correction steps to every tree obtained from the tree decomposition. Recall that, for any tree $H$ containing $r$ and any pruning threshold $\gamma\in[0,1]$, $\operatorname{core}(H,\gamma)$ is obtained from $H$ by removing the edges that are not needed to connect $r$ to all vertices $v\in V(H)$ with $y_v^*\ge\gamma$; equivalently, it is the inclusion-wise minimal subtree with this property, following BKN~\cite[Definition~1]{BKN2026}.

For $T\in\mathcal T$, write $W_{T,\gamma}:=\operatorname{core}(T,\gamma)$. Let $J_{T,\gamma}$ be a minimum-length perfect matching on $\operatorname{odd}(W_{T,\gamma})$; if this set is empty, take the empty matching. Form $H_{T,\gamma}$ on $V(W_{T,\gamma})$ with edge multiset $E(W_{T,\gamma})\uplus J_{T,\gamma}$, where $\uplus$ denotes multiset union.

The multigraph $H_{T,\gamma}$ is connected and has even degree at every vertex, so shortcutting an Euler tour gives a tour $\tau_{T,\gamma}$ visiting exactly $V(W_{T,\gamma})$. A root-only core gives the root-only tour, and a two-vertex core gives the tour traversing its edge twice.

Algorithm~\ref{alg:bkn} considers every $T\in\mathcal T$ and every pruning threshold $\gamma\in\Gamma:=\{y_v^*:v\in V\}$, and returns a tour with the least objective value.

\begin{algorithm}[H]
\caption{A simpler version of the BKN algorithm}\label{alg:bkn}
\begin{algorithmic}[1]
\Require A PCTSP instance $(G=(V,E),r,c,\pi)$.
\Ensure A tour containing $r$.
\State Compute an optimal solution $(x^*,y^*)$ of~\eqref{eq:lp}.
\State Compute a family $\mathcal T$ of trees through Lemma~\ref{lem:decomp} applied to $(x^*,y^*)$.
\State Set $\Gamma\gets\{y_v^*:v\in V\}$.
\ForAll{$T\in\mathcal T$ and $\gamma\in\Gamma$}
    \State Set $W_{T,\gamma}\gets\operatorname{core}(T,\gamma)$.
    \State Compute a minimum-length perfect matching $J_{T,\gamma}$ on $\operatorname{odd}(W_{T,\gamma})$.
    \State Form $H_{T,\gamma}$ with edge multiset $E(W_{T,\gamma})\uplus J_{T,\gamma}$.
    \State Obtain $\tau_{T,\gamma}$ by shortcutting an Euler tour of $H_{T,\gamma}$.
\EndFor
\State \Return a tour $\tau_{T,\gamma}$ of minimum objective value over all $T\in\mathcal T$ and $\gamma\in\Gamma$.
\end{algorithmic}
\end{algorithm}

Algorithm~\ref{alg:bkn} runs in polynomial time. The family $\mathcal T$ has polynomial size and $|\Gamma|\le|V|$, so at most $|\mathcal T||V|$ tours are computed. The LP solution, tree decomposition, each core, and each minimum-length perfect matching~\cite[Sections~4 and~7]{Edmonds1965} can be computed in polynomial time, as can each Euler tour and its shortcutting. Omitting the splitting-off preprocessing before the tree decomposition simplifies Algorithm~\ref{alg:bkn}, but this alone does not imply an asymptotic improvement in running time over a slightly modified implementation of the BKN algorithm; see Remark~\ref{rem:shared-decomposition}.

As in the BKN analysis~\cite[Sections~2 and~3]{BKN2026}, we can bound $\mathrm{ALG}$ by analyzing a random choice of $T$ and $\gamma$. The thresholds in $\Gamma$ cover all cores obtained for $\gamma\in[0,1]$: increasing $\gamma$ to the smallest value in $\Gamma$ at least $\gamma$ does not change which vertices satisfy $y_v^*\ge\gamma$. This value exists because $1=y_r^*\in\Gamma$. Since shortcutting does not increase length and preserves the vertices of the core, for every $T\in\mathcal T$ and $\gamma\in[0,1]$ we have
\begin{equation}\label{eq:candidate-objective}
 \mathrm{ALG}\le c(W_{T,\gamma})+c(J_{T,\gamma})+\pi(V\setminus V(W_{T,\gamma})).
\end{equation}
Taking expectation of the right-hand side under any distribution on $\mathcal T\times[0,1]$ gives an upper bound on $\mathrm{ALG}$. We will choose $T\sim\mu$ and a distribution for the pruning threshold $\gamma$ to bound this expectation.

\section{A stronger bound on parity correction}\label{sec:analysis}

As explained in Section~\ref{sec:why}, every vector $z\in P_{\operatorname{odd}(W_{T,\gamma})}^\uparrow$ gives $c(J_{T,\gamma})\le c^\top z$. We first bound the expected length of the auxiliary edges, then use their incidence vectors to construct such a vector and bound the length of the core $W_{T,\gamma}$ together with its parity-correction matching.

\subsection{Bounding the length of the auxiliary edges}

For an auxiliary threshold $t\in[0,1]$, recall that $U_t=\{v\in V:y_v^*\ge t\}$ is defined on all input vertices. For each $v\in V$, define $f_t(v):=\min_{u\in U_t}c_{uv}$. The set $U_t$ contains $r$, so this minimum is well defined. We have $f_t(v)=0$ for $v\in U_t$, including $v=r$. The function $f_t$ depends only on $y^*$ and the edge lengths $c$, and the triangle inequality gives $|f_t(u)-f_t(v)|\le c_{uv}$. By the definition of the auxiliary edges in Section~\ref{sec:why}, whenever an auxiliary edge is added for an odd-degree vertex $v$ outside $U_t$, its length is $f_t(v)$.

\paragraph{Potential function.} To bound these auxiliary edge lengths, for any tree $H$ containing $r$, label the endpoints of each edge $\{u,v\}\in E(H)$ so that $u$ lies on the unique path in $H$ from $r$ to $v$, and define the potential function
\begin{equation}\label{eq:potential}
 \Phi_t(H):=\sum_{v\in V(H)\setminus\{r\}} f_t(v)(\deg_H(v)-2)
 =\sum_{\{u,v\}\in E(H)}\bigl(f_t(u)-f_t(v)\bigr).
\end{equation}
The equality follows by collecting the coefficient of each $f_t(v)$; the term for the root vanishes because $f_t(r)=0$. When $R$ is a fixed core and $H=\operatorname{core}(R,t)$, every odd-degree vertex of $H$ outside $U_t$ has degree at least three. The vertex expression can be used to bound the length of the auxiliary edges, while the edge expression relates changes in $\Phi_t(H)$ to the length of the edges removed by pruning. The following lemma gives the properties of this potential function used in the analysis.

\begin{lemma}\label{lem:potential}
For any tree $H$ containing $r$ and $0\le t\le t'\le1$,
\begin{equation}\label{eq:potential-monotone}
\begin{aligned}
 0\le\Phi_t(\operatorname{core}(H,t'))&\le\Phi_t(\operatorname{core}(H,t)),\\
 \Phi_t(\operatorname{core}(H,t))-\Phi_t(H)&\le c(H)-c(\operatorname{core}(H,t)).
\end{aligned}
\end{equation}
For $T\sim\mu$, it follows that
\begin{equation}\label{eq:potential-mean}
 \mathbb{E}_T\Phi_t(W_{T,t'})\le\mathbb{E}_T\bigl[c(T)-c(W_{T,t})\bigr].
\end{equation}
\end{lemma}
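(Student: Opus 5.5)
The plan is to prove every claim in~\eqref{eq:potential-monotone} from the two expressions in~\eqref{eq:potential}, using three facts about $f_t$: $f_t\ge0$, $f_t$ vanishes on $U_t$ (in particular $f_t(r)=0$), and $|f_t(u)-f_t(v)|\le c_{uv}$. The expectation bound~\eqref{eq:potential-mean} will then follow from the marginal properties of Lemma~\ref{lem:decomp}.

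\emph{Nonnegativity.} Let $t\le t'$ and use the vertex expression for $\Phi_t(\operatorname{core}(H,t'))$. By minimality of the core, every leaf of $\operatorname{core}(H,t')$ other than $r$ satisfies $y_v^*\ge t'\ge t$. Such a leaf lies in $U_t$, so $f_t(v)=0$ and its term $f_t(v)(1-2)$ vanishes. Every other non-root vertex has degree at least $2$, so its term $f_t(v)(\deg(v)-2)$ is nonnegative. Hence $\Phi_t(\operatorname{core}(H,t'))\ge0$.

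I will reuse a slightly more general version of this argument. Let $B$ be a tree rooted at a vertex $a$ in which every leaf other than $a$ has $f_t=0$, and orient each edge $\{u,v\}$ of $B$ so that $u$ is the parent. Regrouping by vertices gives
\[
 \sum_{\{u,v\}\in E(B)}\bigl(f_t(u)-f_t(v)\bigr)=f_t(a)\deg_B(a)+\sum_{w\in V(B)\setminus\{a\}}f_t(w)\bigl(\deg_B(w)-2\bigr)\ \ge0 .
\]

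\emph{Monotonicity.} Since $U_{t'}\subseteq U_t$, $\operatorname{core}(H,t')$ is a subtree of $\operatorname{core}(H,t)$ containing $r$. The removed edges split into subtrees $B$, each hanging off some attachment vertex $a\in V(\operatorname{core}(H,t'))$. Removing $B$ does not change the root-to-vertex orientation of the remaining edges. So, by the edge expression, $\Phi_t(\operatorname{core}(H,t))-\Phi_t(\operatorname{core}(H,t'))$ is the sum of the displayed quantity over these subtrees $B$. The leaves of each $B$ other than $a$ are leaves of $\operatorname{core}(H,t)$ other than $r$, hence lie in $U_t$. The displayed inequality then shows each summand is nonnegative, which proves monotonicity.

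\emph{Pruning inequality.} By the edge expression, $\Phi_t(\operatorname{core}(H,t))-\Phi_t(H)$ equals $\sum(f_t(v)-f_t(u))$ over the edges $\{u,v\}\in E(H)\setminus E(\operatorname{core}(H,t))$; again the orientations agree because the core is a subtree containing $r$. By the Lipschitz property each term is at most $c_{uv}$, and summing these gives $c(H)-c(\operatorname{core}(H,t))$.

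\emph{Expectation.} For each $T$, chaining the two inequalities with $H=T$ gives
\[
 \Phi_t(W_{T,t'})\le\Phi_t(W_{T,t})\le\Phi_t(T)+c(T)-c(W_{T,t}).
\]
It therefore suffices to show $\mathbb{E}_T\Phi_t(T)\le0$. Since $\deg_T(v)=0$ when $v\notin V(T)$, the vertex expression gives
\[
 \mathbb{E}_T\Phi_t(T)=\sum_{v\ne r}f_t(v)\bigl(\mathbb{E}_T\deg_T(v)-2\,\mathbb{P}_T[v\in V(T)]\bigr).
\]
Lemma~\ref{lem:decomp} gives $\mathbb{P}_T[v\in V(T)]=y_v^*$ and $\mathbb{E}_T\deg_T(v)\le2y_v^*$, so each bracket is nonpositive, and $f_t\ge0$ finishes the argument. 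I expect the only delicate step to be the monotonicity bookkeeping: identifying the removed edge set as a union of hanging subtrees whose non-attachment leaves lie in $U_t$, and checking that the orientations are unchanged. The rest is direct.
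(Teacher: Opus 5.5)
Your proof is correct. The nonnegativity, pruning-inequality and expectation steps match the paper's; only the monotonicity step takes a different route.

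\textbf{Your route.} For monotonicity you pass to the edge expression and split $E(\operatorname{core}(H,t))\setminus E(\operatorname{core}(H,t'))$ into hanging subtrees $B$. You then use a regrouping identity with the extra root term $f_t(a)\deg_B(a)$. This forces you to check two things: that edge orientations are unchanged, and that the non-attachment leaves of each $B$ are nonroot leaves of $\operatorname{core}(H,t)$. Both checks are valid.

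\textbf{The paper's route.} The paper stays with the vertex expression. Your leaf argument for $\operatorname{core}(H,t')$ applies verbatim to $\operatorname{core}(H,t)$ (take $t'=t$), so every term $f_t(v)(\deg_{\operatorname{core}(H,t)}(v)-2)$ is nonnegative. Passing to the subtree $\operatorname{core}(H,t')$ only lowers the degrees of retained vertices, which cannot increase their terms since $f_t\ge0$. It also drops some of these nonnegative terms. That gives the inequality in one line, with no bookkeeping about orientations or attachment vertices.

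\textbf{Comparison.} Your version yields an exact formula for the drop $\Phi_t(\operatorname{core}(H,t))-\Phi_t(\operatorname{core}(H,t'))$ as a sum of nonnegative per-subtree contributions. That extra information is never used. So the step you flagged as the delicate one is the easiest once you work in the vertex form.
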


\begin{proof}
We first prove~\eqref{eq:potential-monotone}. By the inclusion-wise minimality of $\operatorname{core}(H,t)$, every nonroot leaf of $\operatorname{core}(H,t)$ belongs to $U_t$. Similarly, every nonroot leaf of $\operatorname{core}(H,t')$ belongs to $U_{t'}\subseteq U_t$. Hence $f_t(v)=0$ at every nonroot leaf of both $\operatorname{core}(H,t)$ and $\operatorname{core}(H,t')$. Every other nonroot vertex has degree at least two in the corresponding tree, so all terms in the vertex expression of~\eqref{eq:potential} are nonnegative. Since $t'\ge t$, the tree $\operatorname{core}(H,t')$ is a subtree of $\operatorname{core}(H,t)$. Passing to this subtree can only decrease the degrees of the vertices that remain and removes only nonnegative terms. This proves the first line of~\eqref{eq:potential-monotone}.

For the second line, use the edge expression in~\eqref{eq:potential} for both $\Phi_t(\operatorname{core}(H,t))$ and $\Phi_t(H)$. Since $\operatorname{core}(H,t)$ is a subtree of $H$, every edge retained in $\operatorname{core}(H,t)$ has the same contribution $f_t(u)-f_t(v)$ in both sums. These contributions cancel when the two values of the potential function are subtracted. The difference $\Phi_t(\operatorname{core}(H,t))-\Phi_t(H)$ contains only contributions from the edges removed from $H$. Each removed edge $\{u,v\}$ contributes $f_t(v)-f_t(u)$, where $u$ lies on the unique path in $H$ from $r$ to $v$, and this contribution is at most $|f_t(u)-f_t(v)|\le c_{uv}$. Summing over the removed edges gives $\Phi_t(\operatorname{core}(H,t))-\Phi_t(H)\le c(H)-c(\operatorname{core}(H,t))$.

It remains to prove~\eqref{eq:potential-mean}. By~\eqref{eq:decomp} and the degree constraints in~\eqref{eq:lp}, each nonroot vertex $v$ belongs to $T$ with probability $y_v^*$ and satisfies $\mathbb{E}_T\deg_T(v)\le2y_v^*$. Using the vertex expression in~\eqref{eq:potential} and $f_t(v)\ge0$,
\[
 \mathbb{E}_T\Phi_t(T)
 =\sum_{v\ne r}f_t(v)
 \bigl(\mathbb{E}_T\deg_T(v)-2\mathbb{P}_T[v\in V(T)]\bigr)
 \le0.
\]
Applying the second line of~\eqref{eq:potential-monotone} with $H=T$, taking expectations, and using $\mathbb{E}_T\Phi_t(T)\le0$ gives
\[
 \mathbb{E}_T\Phi_t(W_{T,t})
 \le
 \mathbb{E}_T\bigl[c(T)-c(W_{T,t})\bigr].
\]
The first line of~\eqref{eq:potential-monotone}, applied with $t'\ge t$, gives $\mathbb{E}_T\Phi_t(W_{T,t'})\le\mathbb{E}_T\Phi_t(W_{T,t})$. Combining the two bounds proves~\eqref{eq:potential-mean}.
\end{proof}

For a fixed core $R$, recall that $K_t=\operatorname{core}(R,t)$ and that the auxiliary edge multiset $I_t(K_t)$ contains one auxiliary edge for every odd-degree vertex $v\in V(K_t)\setminus U_t$. By the definition of $f_t$, this auxiliary edge has length $f_t(v)$ and its other endpoint lies in $U_t$. The odd-degree vertices of $I_t(K_t)$ outside $U_t$ are exactly the odd-degree vertices of $K_t$ outside $U_t$:
\begin{equation}\label{eq:partial-parity}
 \operatorname{odd}(I_t(K_t))\setminus U_t
 =
 \operatorname{odd}(K_t)\setminus U_t.
\end{equation}
This is the parity property used below when constructing $z_R$: although no parity condition is imposed at vertices of $U_t$ and $I_t(K_t)$ need not itself be a $Q$-join for $Q=\operatorname{odd}(R)$, its incidence vector can help satisfy the cut inequalities in the $Q$-join dominant.

We now apply the vertex expression in~\eqref{eq:potential} to $K_t$ to bound the length of the auxiliary edges. By the inclusion-wise minimality of $K_t$, every nonroot leaf of $K_t$ belongs to $U_t$. An odd-degree vertex $v\in V(K_t)\setminus U_t$ cannot have degree one and must have degree at least three. Its contribution $f_t(v)(\deg_{K_t}(v)-2)$ to $\Phi_t(K_t)$ is at least $f_t(v)$, the length of its auxiliary edge, while all other terms in the vertex expression are nonnegative. Since $I_t(K_t)$ contains exactly one auxiliary edge of length $f_t(v)$ for each odd-degree vertex $v\in V(K_t)\setminus U_t$, summing these lower bounds over all such vertices and using the nonnegativity of the remaining terms in the vertex expression gives
\begin{equation}\label{eq:repair-length}
 c(I_t(K_t))\le\Phi_t(K_t).
\end{equation}

Now fix a pruning threshold $\gamma$ and, for each $T$, let $R=W_{T,\gamma}$. Pruning first at $\gamma$ and then at $t$ is equivalent to pruning at the larger threshold, so $K_t=W_{T,\max\{\gamma,t\}}$. Applying Lemma~\ref{lem:potential} with $t'=\max\{\gamma,t\}$ and using~\eqref{eq:repair-length} gives
\[
 \mathbb{E}_T c(I_t(K_t))
 \le
 \mathbb{E}_T\bigl[c(T)-c(W_{T,t})\bigr].
\]
This is the bound~\eqref{eq:repair-overview} stated in Section~\ref{sec:why}: the expected length of the auxiliary edges is bounded by the expected length of the edges removed from $T$ when forming $W_{T,t}$.

\subsection{A feasible vector in the \texorpdfstring{$Q$}{Q}-join dominant}

For a fixed core $R$, we now use the parity property in~\eqref{eq:partial-parity} to construct a vector $z_R\in P_{\operatorname{odd}(R)}^\uparrow$ from $x^*$, the edges of $R$, and nonnegative combinations of the incidence vectors $\chi^{I_t(K_t)}$. When we later take $R=W_{T,\gamma}$ and bound $c^\top z_R$ in expectation over $T$, we control the contribution of these incidence vectors to $c^\top z_R$ using the bound on the expected length of the auxiliary edges in~\eqref{eq:repair-overview}.

To define the coefficients on the edges of $R$, we use the same quantity as in the BKN analysis. In Section~\ref{sec:why}, for an edge $e\in E(R)$, $h'_R(e)$ was defined as the maximum of $y_v'$ over the vertices in the connected component of $(V(R),E(R)\setminus\{e\})$ not containing $r$. Here we use the corresponding definition with $y^*$. For any tree $H$ containing $r$ and any edge $e\in E(H)$, let $h_H(e)$ be the maximum of $y_v^*$ over the vertices in the connected component of $(V(H),E(H)\setminus\{e\})$ not containing $r$. The edge $e$ belongs to $\operatorname{core}(H,\gamma)$ exactly when this connected component contains a vertex with $y_v^*\ge\gamma$. $h_H(e)$ is the largest pruning threshold $\gamma$ for which $e$ belongs to $\operatorname{core}(H,\gamma)$. For $\gamma\in[0,1]$,
\begin{equation}\label{eq:height}
\begin{aligned}
 e\in E(\operatorname{core}(H,\gamma))&\ \Longleftrightarrow\ h_H(e)\ge\gamma,\\
 h_{\operatorname{core}(H,\gamma)}(e)&=h_H(e)\quad\forall e\in E(\operatorname{core}(H,\gamma)).
\end{aligned}
\end{equation}
The second identity holds because a vertex attaining $h_H(e)$ is retained whenever $e$ is retained, so pruning does not change this maximum.

With this notation, we can now specify the vector used for a fixed core $R$. The parameter $s$ determines the interval $[0,s]$ of auxiliary thresholds used in the auxiliary term, while the coefficients on the edges of $R$ use $\min\{h_R(e),a\}$, so $a$ limits how much is subtracted from each such coefficient.

\begin{lemma}\label{lem:certificate}
Let $R$ be a fixed core, let $0\le s\le a\le1$ with $a>0$, and put $\lambda=1/(3a-s)$. For $t\in[0,1]$, let $K_t=\operatorname{core}(R,t)$. Define
\begin{equation}\label{eq:certificate}
 z_R=
 \underbrace{\lambda x^*}_{\text{$x^*$ term}}
 +
 \underbrace{\sum_{e\in E(R)}\bigl(1-2\lambda\min\{h_R(e),a\}\bigr)\chi^{\{e\}}}_{\text{$R$ term}}
 +
 \underbrace{2\lambda\int_0^s\chi^{I_t(K_t)}\,dt}_{\text{auxiliary term}}.
\end{equation}
Then $z_R\in P_{\operatorname{odd}(R)}^\uparrow$.
\end{lemma}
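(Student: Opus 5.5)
The plan is to verify directly the two defining conditions of $P_{\operatorname{odd}(R)}^\uparrow$: nonnegativity, and the cut inequality $z_R(\delta(S))\ge1$ for every $S$ with $|S\cap\operatorname{odd}(R)|$ odd. Nonnegativity is immediate. The $x^*$ term and the auxiliary term are nonnegative, and each $R$-coefficient satisfies $1-2\lambda\min\{h_R(e),a\}\ge 1-2a/(3a-s)=(a-s)/(3a-s)\ge0$ because $s\le a$. For the cut inequality, complement $S$ if necessary so that $r\notin S$. Write $h=h(S)$ and $m=m(S)=|\delta(S)\cap E(R)|$, which is odd. We always use $\lambda x^*(\delta(S))\ge2\lambda h$, from the LP cut constraints.

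\emph{Case $m=1$.} Let $e$ be the unique edge of $R$ crossing $\delta(S)$. Since $r\notin S$ and $R$ is a tree containing $r$, the vertices of $R$ in $S$ are attached to the rest of $R$ only through $e$. Hence the component of $R-e$ not containing $r$ is exactly $V(R)\cap S$, which gives $h_R(e)\le h$. Then $z_R(\delta(S))\ge2\lambda h+1-2\lambda\min\{h_R(e),a\}\ge1$, using only the $x^*$ term and the $R$ term.

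\emph{Case $m\ge3$.} This is the main step, and I would handle it by writing everything as an integral over the threshold $t$. By \eqref{eq:height}, $\min\{h_R(e),a\}=\int_0^a\mathbf 1[h_R(e)\ge t]\,dt$, and $\mathbf 1[h_R(e)\ge t]=\mathbf 1[e\in E(K_t)]$. So the $R$ term contributes exactly $m-2\lambda\int_0^a m_t(S)\,dt$, where $m_t(S)\le m$. For the auxiliary term, take $t\in(h,s]$. Then $S\cap U_t=\varnothing$, and parity in the tree $K_t$ together with \eqref{eq:partial-parity} gives
\[
 m_t(S)\equiv|S\cap\operatorname{odd}(K_t)|=|S\cap\operatorname{odd}(I_t(K_t))|\equiv|\delta(S)\cap I_t(K_t)|\pmod 2.
\]
Thus at least one auxiliary edge crosses the cut whenever $m_t(S)$ is odd. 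Consequently $m_t(S)-\mathbf 1[m_t(S)\text{ odd}]\le m-1$ for such $t$, because $m$ is odd.

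It therefore suffices to prove $2\lambda\bigl(\int_0^a m_t\,dt-h-\int_{h}^{s}\mathbf 1[m_t\text{ odd}]\,dt\bigr)\le m-1$, where the last integral is empty if $h\ge s$. If $h\le s$, the bracket is at most $hm+(s-h)(m-1)+(a-s)m-h=am-s$. If $h>s$, the bracket is at most $am-h\le am-s$. The resulting requirement $2(am-s)\le(m-1)(3a-s)$ rearranges to $(a-s)(m-3)\ge0$, which holds. This last computation is exactly where the choice $\lambda=1/(3a-s)$ and the hypothesis $s\le a$ enter.

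The step I expect to need the most care is the parity transfer from $K_t$ to $I_t(K_t)$. One must check that vertices of $U_t$, where $I_t(K_t)$ has uncontrolled parity, do not matter; this holds because $S\cap U_t=\varnothing$ for $t>h$. One must also check that the measure-zero endpoints of the integrals are harmless.
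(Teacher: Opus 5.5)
Your proof is correct and follows the paper's argument. It uses the same three ingredients: the integral representation $m(S)-2\lambda\int_0^a m_t(S)\,dt$ of the $R$ term, the parity transfer through \eqref{eq:partial-parity} for $t\in(h(S),s]$, and the closing inequality $(m(S)-3)(a-s)\ge0$. The only difference is how you split the cases. The paper treats $h(S)\ge a$ separately by a direct three-edge count, whereas you handle every cut with $m(S)\ge3$ through the integral bound and split on $h(S)\le s$ versus $h(S)>s$. This works because $am(S)-h(S)\le am(S)-s$ when $h(S)>s$.
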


The three parts of~\eqref{eq:certificate} play complementary roles. The $x^*$ term and the $R$ term parallel the two corresponding parts of the BKN vector in~\eqref{eq:bkn-vector}: their coefficients are chosen together so that the resulting vector satisfies the cut inequalities in~\eqref{eq:join}. Indeed, when $s=0$ and $a=1$, the vector in~\eqref{eq:certificate} reduces to the BKN vector in~\eqref{eq:bkn-vector} with $\delta=0$. Our construction modifies this balance in two ways. The parameter $a$ limits how much is subtracted from the coefficient of each edge of $R$, while the auxiliary term restores part of the cut value lost through these reductions by exploiting the parity information revealed by the trees $K_t$.

The mechanism of the auxiliary term is the one described in Section~\ref{sec:why}. For a cut $S$ with $r\notin S$ and $h(S)=\max_{v\in S}y_v^*$, thresholds $t>h(S)$ exclude all vertices of $S$ from $U_t$. The parity of the number of edges of $K_t$ crossing the cut can then be converted, through the auxiliary edges, into additional value in the cut inequality. Integrating the incidence vectors $\chi^{I_t(K_t)}$ over $t\in[0,s]$ makes this contribution available for all $t\in(h(S),s]$, so the parameter $s$ determines its extent. The choice $\lambda=1/(3a-s)$ balances the $x^*$ term, the reductions in the $R$ term, and the auxiliary term in the critical case with three crossing edges and $h(S)<a$. When $R=W_{T,\gamma}$, the auxiliary threshold $t$ need not be larger than the pruning threshold $\gamma$: for $t\le\gamma$, we have $K_t=R$, while larger values of $t$ may further prune $R$. The proof below makes this balance precise by verifying the cut inequalities for the different possible cuts.

For the integral in~\eqref{eq:certificate}, recall that the auxiliary edge incident to an odd-degree vertex $v\in V(K_t)\setminus U_t$ joins $v$ to a vertex $u\in U_t$ satisfying $c_{uv}=f_t(v)$. The sets $U_t$ and $K_t$ change only when $t$ passes a value $y_v^*$. If several vertices of $U_t$ attain the minimum defining $f_t(v)$, we make the same choice throughout each interval on which $U_t$ and $K_t$ are unchanged. Thus, $\chi^{I_t(K_t)}$ is piecewise constant in $t$, and the integral in~\eqref{eq:certificate} is a finite nonnegative combination of incidence vectors.

\begin{proof}
Since $s\le a$, we have $\lambda a\le1/2$, so the coefficients in the $R$ term are nonnegative. The $x^*$ term and the auxiliary term are also nonnegative. Hence $z_R\in\mathbb{R}_{\ge0}^E$, and it remains to verify the cut inequalities in~\eqref{eq:join}, namely,
\[
 z_R(\delta(S))\ge1
 \qquad
 \forall S\subseteq V\text{ with }|S\cap\operatorname{odd}(R)|\text{ odd}.
\]

Fix such a set $S$ and, by complementing $S$ if necessary, assume $r\notin S$. Let $m(S):=|\delta(S)\cap E(R)|$ be the number of edges of $R$ crossing the cut, and recall that $h(S)=\max_{v\in S}y_v^*$. Since $m(S)\equiv|S\cap\operatorname{odd}(R)|\pmod2$, the number $m(S)$ is positive and odd. The cut constraints in~\eqref{eq:lp} give $x^*(\delta(S))\ge2h(S)$. We distinguish three cases according to $m(S)$ and $h(S)$.

Suppose first that exactly one edge $e$ of $R$ crosses the cut. This is the simplest case because the unique crossing edge separates from $r$ a connected component of $R$ that lies entirely in $S$. Every vertex in the connected component of $(V(R),E(R)\setminus\{e\})$ not containing $r$ belongs to $S$, so $h(S)\ge h_R(e)$. The $x^*$ term contributes at least $2\lambda h_R(e)$ to $z_R(\delta(S))$, while the edge $e$ contributes $1-2\lambda\min\{h_R(e),a\}$ through the $R$ term. Hence
\[
 z_R(\delta(S))
 \ge 2\lambda h_R(e)+1-2\lambda\min\{h_R(e),a\}
 \ge1.
\]

Now suppose that at least three edges of $R$ cross the cut and $h(S)\ge a$. The cut constraints in~\eqref{eq:lp} then give $x^*(\delta(S))\ge2a$. The $x^*$ term contributes at least $2\lambda a$ to $z_R(\delta(S))$, while each edge of $R$ crossing the cut contributes at least $1-2\lambda a\ge0$ through the $R$ term. Using any three crossing edges gives
\[
 z_R(\delta(S))
 \ge2\lambda a+3(1-2\lambda a)
 =3-4\lambda a
 \ge1.
\]

It remains to consider $m(S)\ge3$ and $h(S)<a$. Here the preceding estimates may not suffice to prove $z_R(\delta(S))\ge1$, so we also use the auxiliary term.

Recall that $m_t(S)=|\delta(S)\cap E(K_t)|$ is the number of edges of $K_t$ crossing the cut, and let $p_t(S):=m_t(S)\bmod2$. As $t$ increases, $m_t(S)$ changes only when an edge of $R$ crossing the cut leaves $K_t$, so it is a step function of $t$. By~\eqref{eq:height}, each edge $e\in\delta(S)\cap E(R)$ belongs to $K_t$ exactly when $t\le h_R(e)$. Over $t\in[0,a]$, this edge is counted in $m_t(S)$ for an interval of length $\min\{h_R(e),a\}$. Summing over the edges of $R$ crossing the cut gives
\[
 \sum_{e\in\delta(S)\cap E(R)}\min\{h_R(e),a\}
 =
 \int_0^a m_t(S)\,dt.
\]
The contribution of the $R$ term to $z_R(\delta(S))$ is therefore $m(S)-2\lambda\int_0^a m_t(S)\,dt$.

For $t>h(S)$, we have $S\cap U_t=\varnothing$. The parity property in~\eqref{eq:partial-parity} then implies that the number of edges of $I_t(K_t)$ crossing the cut has the same parity as $m_t(S)$, so $\chi^{I_t(K_t)}(\delta(S))\ge p_t(S)$. For $h(S)<t\le s$, this bound can be used to offset part of the reduction in the $R$ term through the auxiliary term in~\eqref{eq:certificate}.

Let $\theta_t=1$ for $t\le s$ and $\theta_t=0$ otherwise, so that $\theta_t$ records whether the auxiliary threshold $t$ lies in the interval of integration $[0,s]$ in~\eqref{eq:certificate}. By the preceding parity bound, the auxiliary term contributes at least $2\lambda\int_{h(S)}^a\theta_t p_t(S)\,dt$ to $z_R(\delta(S))$. We compare this lower bound on the auxiliary contribution with $2\lambda\int_0^a m_t(S)\,dt$, the amount subtracted from $m(S)$ by the $R$ term. Since $K_t$ is a subtree of $R$, $m_t(S)\le m(S)$. Moreover, $m_t(S)-p_t(S)$ is even, whereas $m(S)$ is odd, so $m_t(S)-p_t(S)\le m(S)-1$. For $t>h(S)$ this gives
\[
 m_t(S)-\theta_t p_t(S)\le m(S)-1+(1-\theta_t),
\]
while for $t\le h(S)$ we use only $m_t(S)\le m(S)$. Integrating these bounds yields
\begin{align*}
 \int_0^a m_t(S)\,dt-\int_{h(S)}^a\theta_t p_t(S)\,dt
 &\le m(S)h(S)+(m(S)-1)(a-h(S))+\int_{h(S)}^a(1-\theta_t)\,dt\\
 &\le m(S)a-s+h(S).
\end{align*}
Here the last integral is at most $a-s$, because $1-\theta_t$ vanishes for $t\le s$.

Adding the lower bound $2\lambda h(S)$ from the $x^*$ term to the combined lower bound for the $R$ term and the auxiliary term gives
\begin{align*}
 z_R(\delta(S))
 &\ge2\lambda h(S)+m(S)-2\lambda\bigl(m(S)a-s+h(S)\bigr)\\
 &=1+\frac{(m(S)-3)(a-s)}{3a-s}\ge1.
\end{align*}
The last expression also explains the choice $\lambda=1/(3a-s)$: when $m(S)=3$, the resulting lower bound on $z_R(\delta(S))$ is exactly $1$, and larger odd values of $m(S)$ cannot decrease this lower bound. Thus $z_R$ satisfies every cut inequality in~\eqref{eq:join}.
\end{proof}

We next apply the construction to $R=W_{T,\gamma}$. For each fixed $T$, Lemma~\ref{lem:certificate} bounds the length of the parity-correction matching by $c^\top z_R$. After adding the length of $W_{T,\gamma}$, the $R$ term can be combined directly with the edges of $W_{T,\gamma}$, while the auxiliary term is handled in expectation using~\eqref{eq:repair-overview}. The following proposition combines these bounds into an upper bound on the expected length of $W_{T,\gamma}$ together with its parity-correction matching. We write $\mathbf{1}_A$ for the indicator of a condition $A$, equal to $1$ when $A$ holds and $0$ otherwise, and $(b)_+:=\max\{b,0\}$.

\begin{proposition}\label{prop:length}
Fix a pruning threshold $\gamma\in[0,1]$ and parameters $0\le s\le a\le1$ with $a>0$, and put $\lambda=1/(3a-s)$. For $T\sim\mu$, the expected length of the core $W_{T,\gamma}$ together with its parity-correction matching satisfies
\begin{equation}\label{eq:length}
\begin{aligned}
 &\mathbb{E}_T\bigl[c(W_{T,\gamma})+c(J_{T,\gamma})\bigr]\\
 &\qquad\le \lambda c^\top x^*+\mathbb{E}_T\sum_{e\in E(T)}c_e
 \left[\bigl(2-2\lambda\min\{h_T(e),a\}\bigr)\mathbf{1}_{\{\gamma\le h_T(e)\}}
       +2\lambda\left(s-h_T(e)\right)_{+}\right].
\end{aligned}
\end{equation}
\end{proposition}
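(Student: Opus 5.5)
Fix $T$ and set $R=W_{T,\gamma}$. By Lemma~\ref{lem:certificate}, $z_R\in P^\uparrow_{\operatorname{odd}(R)}$. By the Edmonds--Johnson argument recalled in Section~\ref{sec:why}, this gives $c(J_{T,\gamma})\le c^\top z_R$. Adding $c(W_{T,\gamma})$ and expanding~\eqref{eq:certificate}, we get the pointwise bound
\[
 c(W_{T,\gamma})+c(J_{T,\gamma})\le \lambda c^\top x^*+\sum_{e\in E(R)}c_e\bigl(2-2\lambda\min\{h_R(e),a\}\bigr)+2\lambda\int_0^s c(I_t(K_t))\,dt .
\]
We then rewrite the middle sum over $E(T)$ using~\eqref{eq:height}. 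An edge $e\in E(T)$ lies in $R$ exactly when $\gamma\le h_T(e)$, and for such edges $h_R(e)=h_T(e)$. So the middle sum equals $\sum_{e\in E(T)}c_e(2-2\lambda\min\{h_T(e),a\})\mathbf 1_{\{\gamma\le h_T(e)\}}$, which is exactly the first part of~\eqref{eq:length} before expectation.

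For the auxiliary term, we take $\mathbb E_T$ and exchange expectation with the integral. This is allowed because the integrand is piecewise constant in $t$ with finitely many pieces and $\mathcal T$ is finite. For each fixed $t$, the bound derived right after~\eqref{eq:repair-length} gives
\[
 \mathbb E_T c(I_t(K_t))\le \mathbb E_T\bigl[c(T)-c(W_{T,t})\bigr]
\]
(Lemma~\ref{lem:potential} with $t'=\max\{\gamma,t\}$, using $K_t=W_{T,\max\{\gamma,t\}}$). Applying~\eqref{eq:height} again, $c(T)-c(W_{T,t})=\sum_{e\in E(T)}c_e\mathbf 1_{\{h_T(e)<t\}}$. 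Integrating over $t\in[0,s]$ then gives $\sum_{e\in E(T)}c_e\int_0^s\mathbf 1_{\{t>h_T(e)\}}\,dt=\sum_{e\in E(T)}c_e(s-h_T(e))_+$. Multiplying by $2\lambda$ yields the second part of~\eqref{eq:length}.

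Finally, we combine the two parts under one expectation $\mathbb E_T\sum_{e\in E(T)}$. The constant term $\lambda c^\top x^*$ does not depend on $T$, since Lemma~\ref{lem:certificate} uses the same $x^*$ for every $T$. The only delicate step is handling the auxiliary term. The pointwise bound $c(I_t(K_t))\le\Phi_t(K_t)$ is not itself comparable to removed tree length for a single $T$; the comparison holds only in expectation, through $\mathbb E_T\Phi_t(T)\le0$. So the exchange of $\mathbb E_T$ and $\int dt$ must happen before invoking~\eqref{eq:repair-overview}. We also need to check that~\eqref{eq:repair-overview} holds for every $t\in[0,s]$ regardless of whether $t\le\gamma$. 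It does, because for $t\le\gamma$ Lemma~\ref{lem:potential} still applies with $t'=\gamma\ge t$. The boundary conventions of~\eqref{eq:height} (whether $h_T(e)=t$ counts as kept) affect only a measure-zero set of $t$, so they do not change the integral.
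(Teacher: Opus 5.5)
Your proposal is correct and follows the paper's proof essentially step for step: you bound $c(J_{T,\gamma})$ by $c^\top z_R$ via Lemma~\ref{lem:certificate}, merge $c(R)$ with the $R$ term using~\eqref{eq:height}, and control the auxiliary term by exchanging $\mathbb{E}_T$ with the integral over $t\in[0,s]$ and applying~\eqref{eq:repair-overview} for each $t$. Your extra remarks are accurate and only make explicit what the paper leaves implicit: the finiteness of $\mathcal T$ justifies the exchange, \eqref{eq:repair-overview} remains valid for $t\le\gamma$, and the auxiliary-length comparison holds only in expectation.
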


\begin{proof}
For each $T$, let $R=W_{T,\gamma}$ and apply Lemma~\ref{lem:certificate}. Since $z_R\in P_{\operatorname{odd}(R)}^\uparrow$, the length of the parity-correction matching satisfies $c(J_{T,\gamma})\le c^\top z_R$. It therefore suffices to bound the expectation of $c(R)+c^\top z_R$.

We first combine $c(R)$ with the $R$ term of~\eqref{eq:certificate}. For an edge $e\in E(R)$, its length $c_e$ appears with coefficient $1$ in $c(R)$ and with coefficient $1-2\lambda\min\{h_R(e),a\}$ in the contribution of the $R$ term to $c^\top z_R$. By~\eqref{eq:height}, $h_R(e)=h_T(e)$ for every $e\in E(R)$. Moreover, an edge $e\in E(T)$ belongs to $R=W_{T,\gamma}$ exactly when $\gamma\le h_T(e)$. Their combined contribution to $c(R)+c^\top z_R$ is
\[
 \sum_{e\in E(T)}c_e
 \bigl(2-2\lambda\min\{h_T(e),a\}\bigr)
 \mathbf{1}_{\{\gamma\le h_T(e)\}}.
\]

It remains to bound the contribution of the auxiliary term to $c^\top z_R$. Since $c^\top\chi^{I_t(K_t)}=c(I_t(K_t))$, for each $t\in[0,s]$, with $K_t=\operatorname{core}(R,t)$, the bound~\eqref{eq:repair-overview} applies. Integrating this bound over $t\in[0,s]$ gives
\begin{align*}
 \mathbb{E}_T\int_0^s c(I_t(K_t))\,dt
 &\le\int_0^s\mathbb{E}_T\bigl[c(T)-c(W_{T,t})\bigr]\,dt\\
 &=\mathbb{E}_T\sum_{e\in E(T)}c_e\left(s-h_T(e)\right)_{+}.
\end{align*}
To see the last equality, fix an edge $e\in E(T)$. By~\eqref{eq:height}, $e$ is absent from $W_{T,t}$ exactly when $t>h_T(e)$. $c_e$ is counted in $c(T)-c(W_{T,t})$ precisely for the thresholds $t\in[0,s]$ with $t>h_T(e)$, and this set of thresholds has length $(s-h_T(e))_+$.

The $x^*$ term contributes $\lambda c^\top x^*$ to $c^\top z_R$. Multiplying the preceding bound on the auxiliary edge lengths by $2\lambda$ and combining it with the $x^*$ term and the contribution of $c(R)$ together with the $R$ term gives~\eqref{eq:length}.
\end{proof}

\begin{remark}[The tradeoff in the length bound]\label{rem:length-tradeoff}
Equation~\eqref{eq:length} makes the roles of $s$ and $a$ explicit. For an edge $e\in E(T)$ that belongs to $W_{T,\gamma}$, the coefficient arising from the length of $W_{T,\gamma}$ together with the $R$ term is $2-2\lambda\min\{h_T(e),a\}$. The bound on the auxiliary term adds $2\lambda(s-h_T(e))_+$ to the coefficient of $c_e$ for every $e\in E(T)$. At fixed $a$, increasing $s$ increases $\lambda=1/(3a-s)$ and can reduce the former coefficient, but it also increases the coefficient of $c^\top x^*$ and can increase the bound on the auxiliary-edge contribution. The parameter $a$ affects both $\lambda$ and the truncation $\min\{h_T(e),a\}$. Thus $s$ and $a$ determine a tradeoff between the contribution of the core edges, the global $x^*$ term, and the bound on the auxiliary-edge cost. Section~\ref{sec:ratio} balances this tradeoff against the expected penalty, first with fixed $s$ and $a$, and then with a more flexible joint distribution.
\end{remark}

Figure~\ref{fig:hard-cut} illustrates the cut argument in the proof of Lemma~\ref{lem:certificate}, using the tree and $y^*_v$ of Figure~\ref{fig:core-auxiliary}. Let $R=\operatorname{core}(T,0.50)$ and $S=\{v_1\}$. Then $|S\cap\operatorname{odd}(R)|=1$, $h(S)=y_{v_1}^*=0.66$, and $m(S)=5$. The five edges of $R$ crossing the cut have values $h_R(e)=0.82,0.82,0.79,0.60,0.57$, so
\[
 m_t(S)=
 \begin{cases}
  5,&0\le t\le0.57,\\
  4,&0.57<t\le0.60,\\
  3,&0.60<t\le0.79,\\
  2,&0.79<t\le0.82,\\
  0,&0.82<t\le1.
 \end{cases}
\]
Thus $m_t(S)\le m(S)$ for all $t$ and is nonincreasing in $t$. For $a\ge0.82$, we have $\int_0^a m_t(S)\,dt=0.82+0.82+0.79+0.60+0.57=3.60$. When $t>h(S)=0.66$, the set $S$ is disjoint from $U_t$, and $m_t(S)$ is odd exactly for $t\in(0.66,0.79]$. On this interval, the auxiliary edge chosen at $v_1$ must cross $\delta(S)$, regardless of which vertex of $U_t$ is selected as its other endpoint. The auxiliary term in~\eqref{eq:certificate} therefore contributes on $(h(S),\min\{s,0.79\}]$.

\begin{figure}[!htbp]
\centering
\includegraphics[width=0.75\linewidth]{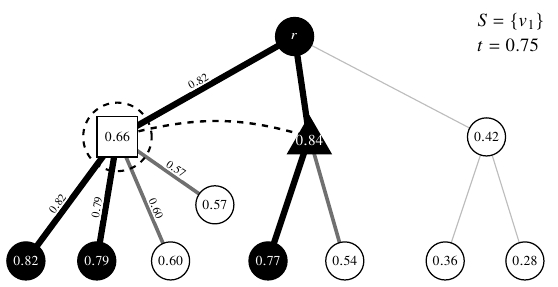}
\small
\caption[The cut $S=\{v_1\}$ and the auxiliary edge crossing it.]{The cut $S=\{v_1\}$, shown by the dashed oval, with pruning threshold $\gamma=0.50$ and auxiliary threshold $t=0.75$. Vertex positions and edge styles correspond to Figure~\ref{fig:core-auxiliary}; in particular, the square is $v_1$ and the filled triangle is the chosen vertex $u\in U_t$. All five edges of $R$ incident to $v_1$ cross the cut, but only the three thick black edges belong to $K_t$. Hence $m(S)=5$ and $m_{0.75}(S)=3$. The values written on the edges are the corresponding numbers $h_R(e)$. The dashed curve is the auxiliary edge $\{u,v_1\}$, which also crosses the cut.}
\label{fig:hard-cut}
\end{figure}

\section{The approximation guarantee}\label{sec:ratio}
\subsection{A scalar bound on the objective value}

We now combine the bound on the expected length in~\eqref{eq:length} with a bound on the expected penalty. By~\eqref{eq:candidate-objective}, it suffices to control these two quantities by the fractional edge-length term and the fractional penalty term, respectively. We choose the pruning threshold $\gamma$ and the parameters $s,a$ jointly in the analysis so that the resulting two bounds can be made to use the same factor.

Let $\eta$ be a joint distribution of $\gamma,s,a$ with $0\le\gamma\le1$ and $0\le s\le a\le1$, $a>0$. Choose $(\gamma,s,a)$ independently of $T\sim\mu$; the three parameters may be correlated with one another. Write $\mathbb{P}_\eta$ and $\mathbb{E}_\eta$ for probability and expectation over this choice. For each realization of $(\gamma,s,a)$, let $\lambda:=1/(3a-s)$, as in Lemma~\ref{lem:certificate}; thus $\lambda$ is also a random variable under $\eta$. Since $s\le a$, we have $3a-s\ge2a$, and hence $\lambda a\le1/2$. In particular, $2-2\lambda\min\{y,a\}\ge1$ for every $y\in[0,1]$.

We first record the part of the distribution that determines the expected penalty. Let
\[
 F(y):=\mathbb{P}_\eta[\gamma\le y]\qquad\forall y\in[0,1],
\]
the distribution function of the pruning threshold $\gamma$. A nonroot vertex with $y_v^*=y$ belongs to the tree $T$ with probability $y$, and whenever it belongs to $T$ and $\gamma\le y$, it is retained in $W_{T,\gamma}$. Since $T$ and $(\gamma,s,a)$ are independent, $yF(y)$ is a lower bound on its probability of being retained. Consequently, its probability of being omitted is at most $1-yF(y)$, and its expected penalty is at most $\pi_v(1-yF(y))$.

We next express the expected length bound from~\eqref{eq:length} in the same one-dimensional form. Let
\[
 L:=\mathbb{E}_\eta[\lambda].
\]
Here $L$ is the expected coefficient multiplying $c^\top x^*$ in~\eqref{eq:length}. For an edge $e\in E(T)$, the coefficient multiplying $c_e$ in the sum in~\eqref{eq:length} depends on $e$ only through $h_T(e)$. We include the additional term $\lambda$ in the following function so that the global term $L c^\top x^*$ and the contribution from the edges of $T$ can later be combined using $\mathbb{E}_T c(T)\le c^\top x^*$. Define
\begin{equation}\label{eq:kernel}
 \Psi(y):=\mathbb{E}_\eta\left[\lambda+
 \bigl(2-2\lambda\min\{y,a\}\bigr)\mathbf{1}_{\{\gamma\le y\}}
 +2\lambda\left(s-y\right)_{+}\right]\quad\forall y\in[0,1].
\end{equation}
If $L<\infty$, then for an edge $e\in E(T)$, the expectation under $\eta$ of the coefficient multiplying $c_e$ in the sum in~\eqref{eq:length} is $\Psi(h_T(e))-L$. A uniform upper bound on $\Psi(y)$ controls the expected length through $\mathbb{E}_T c(T)\le c^\top x^*$.

The two functions $\Psi$ and $F$ capture the length and penalty parts of the analysis, respectively. The following lemma reduces the approximation guarantee to uniform bounds on these two scalar functions.

\begin{lemma}\label{lem:scalar}
Suppose $L<\infty$ and constants $B_{\mathrm{length}},B_{\mathrm{penalty}}$ satisfy
\begin{equation}\label{eq:scalar-conditions}
\begin{aligned}
 \Psi(y)&\le B_{\mathrm{length}} &&\forall y\in[0,1],\\
 1-yF(y)&\le B_{\mathrm{penalty}}(1-y) &&\forall y\in[0,1].
\end{aligned}
\end{equation}
Then Algorithm~\ref{alg:bkn} satisfies
\[
 \mathrm{ALG}\le
 B_{\mathrm{length}} c^\top x^*
 +
 B_{\mathrm{penalty}}\sum_{v\ne r}\pi_v(1-y_v^*).
\]
\end{lemma}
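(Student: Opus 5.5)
The plan is to average the candidate bound~\eqref{eq:candidate-objective} over $T\sim\mu$ and, independently, $(\gamma,s,a)\sim\eta$. Then bound the resulting length and penalty parts separately using Proposition~\ref{prop:length} and the retention estimate derived just before the lemma. Since~\eqref{eq:candidate-objective} holds for every $T\in\mathcal T$ and $\gamma\in[0,1]$, and $(s,a)$ enter only the analysis, we get
\[
 \mathrm{ALG}\le\mathbb{E}_\eta\mathbb{E}_T\bigl[c(W_{T,\gamma})+c(J_{T,\gamma})\bigr]+\mathbb{E}_\eta\mathbb{E}_T\,\pi(V\setminus V(W_{T,\gamma})).
\]

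\emph{Penalty part.} Each nonroot $v$ with $y_v^*=y$ is omitted with probability at most $1-yF(y)$, by independence of $T$ and $\gamma$ and the vertex-marginal property of Lemma~\ref{lem:decomp}. The second condition in~\eqref{eq:scalar-conditions} then bounds its expected penalty by $B_{\mathrm{penalty}}\pi_v(1-y_v^*)$. The root carries no penalty. Summing over $v$ gives the penalty term.

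\emph{Length part.} For each fixed realization of $(\gamma,s,a)$, apply~\eqref{eq:length}. Then take $\mathbb{E}_\eta$ and exchange it with $\mathbb{E}_T$ and the finite sum over $E(T)$. This is justified because all coefficients in~\eqref{eq:length} are nonnegative, using $\lambda a\le 1/2$, and $L<\infty$, so Tonelli applies. The result is
\[
 L\,c^\top x^*+\mathbb{E}_T\sum_{e\in E(T)}c_e\bigl(\Psi(h_T(e))-L\bigr).
\]
The first condition of~\eqref{eq:scalar-conditions} bounds $\Psi(h_T(e))-L$ by $B_{\mathrm{length}}-L$. The term $B_{\mathrm{length}}-L$ is nonnegative: $\Psi(y)\ge L$ trivially, since the bracket in~\eqref{eq:kernel} minus $\lambda$ is nonnegative. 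Hence the sum is at most $(B_{\mathrm{length}}-L)\,\mathbb{E}_T c(T)\le(B_{\mathrm{length}}-L)c^\top x^*$, where the last step uses $\mathbb{E}_T c(T)\le c^\top x^*$ from Lemma~\ref{lem:decomp}. Adding $Lc^\top x^*$ gives $B_{\mathrm{length}}c^\top x^*$.

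The only delicate points are bookkeeping. First, the expectation over $\eta$ of the coefficient must be exactly $\Psi(h_T(e))-L$, which requires finiteness of $L$ so the subtraction is legitimate. Second, we must check nonnegativity of $B_{\mathrm{length}}-L$ before replacing $\mathbb{E}_Tc(T)$ by the upper bound $c^\top x^*$. Third, a measurability remark is needed, since $\eta$ may be a general distribution. I expect the interchange of expectations and this sign check to be the main (mild) obstacle; everything else follows directly from earlier results.
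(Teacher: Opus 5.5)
Your proposal is correct and follows the paper's proof essentially step for step. You average \eqref{eq:candidate-objective} over independent $T\sim\mu$ and $(\gamma,s,a)\sim\eta$, bound the penalty through the retention probability $y_v^*F(y_v^*)$, and bound the length by taking $\mathbb{E}_\eta$ in \eqref{eq:length}, exchanging expectations, and using $\Psi\ge L$ together with $\mathbb{E}_T c(T)\le c^\top x^*$. One minor precision: Tonelli's theorem needs only the nonnegativity of the integrand, whereas $L<\infty$ is what makes the subtraction $\Psi(h_T(e))-L$ meaningful, as you correctly note at the end.
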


\begin{proof}
We first bound the expected length of the core $W_{T,\gamma}$ together with its parity-correction matching $J_{T,\gamma}$. Taking expectation over $\eta$ in~\eqref{eq:length}, exchanging the $\eta$- and $T$-expectations in the edge term, and using the definition of $\Psi$, we obtain
\begin{align*}
 L c^\top x^*+\mathbb{E}_T\sum_{e\in E(T)}c_e\bigl(\Psi(h_T(e))-L\bigr)
 &\le L c^\top x^*+(B_{\mathrm{length}}-L)\mathbb{E}_T c(T)\\
 &\le B_{\mathrm{length}} c^\top x^*.
\end{align*}
The exchange of expectations is justified by Tonelli's theorem, since the integrand is nonnegative: $c_e\ge0$, $2-2\lambda\min\{h_T(e),a\}\ge1$, and $2\lambda(s-h_T(e))_+\ge0$. The independence of $T$ from $(\gamma,s,a)$ then allows the inner expectation over $\eta$ to be identified with $\Psi(h_T(e))-L$. The first inequality uses $\Psi(h_T(e))\le B_{\mathrm{length}}$ for every edge $e\in E(T)$. By the bound $\lambda a\le1/2$ recorded above, the remaining two terms inside the expectation in~\eqref{eq:kernel} are nonnegative, so $\Psi(y)\ge L$ for every $y\in[0,1]$ and hence $B_{\mathrm{length}}-L\ge0$. We can then use $\mathbb{E}_T c(T)\le c^\top x^*$.

We next bound the expected penalty. For a nonroot vertex $v$, the event that $v\in V(T)$ and $\gamma\le y_v^*$ has probability $y_v^*F(y_v^*)$, by~\eqref{eq:decomp} and the independence of $T$ and $(\gamma,s,a)$. Whenever this event occurs, $v$ belongs to $W_{T,\gamma}$. A vertex with $y_v^*<\gamma$ may also remain in $W_{T,\gamma}$ because it is needed to connect other vertices to $r$, so $y_v^*F(y_v^*)$ is only a lower bound on the probability that $v$ belongs to $W_{T,\gamma}$. Its expected contribution to the penalty is at most
\[
 \pi_v\bigl(1-y_v^*F(y_v^*)\bigr).
\]
Applying the second inequality in~\eqref{eq:scalar-conditions} with $y=y_v^*$ and summing over $v\ne r$ gives an expected penalty of at most
\[
 B_{\mathrm{penalty}}\sum_{v\ne r}\pi_v(1-y_v^*).
\]
Combining the expected length and expected penalty bounds and taking expectation of~\eqref{eq:candidate-objective} over $T\sim\mu$ and $(\gamma,s,a)\sim\eta$ proves the claim.
\end{proof}

The factors $B_{\mathrm{length}}$ and $B_{\mathrm{penalty}}$ multiply the fractional edge-length term and the fractional penalty term, respectively. We first derive a simple $1.56$-approximation by fixing $s$ and $a$. We then allow $s$ and $a$ to vary with $\gamma$ and use parameter search to improve the factor to $1.555761$.

\subsection{A simple \texorpdfstring{$1.56$}{1.56}-approximation guarantee}\label{sec:simple-guarantee}

We first apply Lemma~\ref{lem:scalar} with fixed values of $s,a$ and a random pruning threshold $\gamma$. Although the fixed-parameter bound can be optimized further, for simplicity we use the following rational parameters, for which the $1.56$ guarantee follows from elementary inequalities.

\begin{theorem}[A simple approximation guarantee]\label{thm:simple}
Algorithm~\ref{alg:bkn} runs in polynomial time and returns a tour satisfying
\[
 \mathrm{ALG}\le 1.56\,\mathrm{OPT}_{\mathrm{LP}}
 \le 1.56\,\mathrm{OPT}.
\]
\end{theorem}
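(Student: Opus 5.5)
The plan is to apply Lemma~\ref{lem:scalar} with $B_{\mathrm{length}}=B_{\mathrm{penalty}}=1.56$. The conclusion of that lemma is then $\mathrm{ALG}\le1.56\bigl(c^\top x^*+\sum_{v\ne r}\pi_v(1-y_v^*)\bigr)=1.56\,\mathrm{OPT}_{\mathrm{LP}}$. The second inequality $\mathrm{OPT}_{\mathrm{LP}}\le\mathrm{OPT}$ holds because the LP is a relaxation, and polynomial running time was already shown in Section~\ref{sec:bkn}. So everything reduces to choosing $\eta$ and verifying the two scalar conditions in~\eqref{eq:scalar-conditions}.

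\textbf{Choice of parameters.} I fix $a=1$ and $s=3/4$, so that $\lambda=1/(3a-s)=4/9$ is deterministic and $L=4/9<\infty$. I choose the distribution of $\gamma$ to make the penalty condition tight. The condition $1-yF(y)\le1.56(1-y)$ is equivalent to $F(y)\ge1.56-0.56/y$. This is vacuous for $y\le y_0:=0.56/1.56=14/39$. I therefore take
\[
F(y)=\max\{0,\,1.56-0.56/y\}.
\]
This is a valid distribution function: it is nondecreasing, continuous, equal to $0$ at $y_0$ and equal to $1$ at $y=1$. It corresponds to the density $0.56/y^2$ on $[y_0,1]$. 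With this choice the penalty condition holds with equality on $[y_0,1]$ and trivially below $y_0$.

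\textbf{Length condition.} With $s,a$ fixed, the kernel~\eqref{eq:kernel} becomes
\[
\Psi(y)=\tfrac49+\bigl(2-\tfrac89y\bigr)F(y)+\tfrac89\bigl(\tfrac34-y\bigr)_+ .
\]
I will verify $\Psi(y)\le1.56$ on three intervals.
\begin{itemize}
\item On $[0,y_0]$ we have $F=0$, so $\Psi(y)\le\frac49+\frac23=\frac{10}9<1.56$.
\item On $[3/4,1]$ the function $\Psi(y)=\frac49+(2-\frac89y)(1.56-0.56/y)$ is explicit, and clearing the denominator $y$ reduces the claim to a quadratic inequality. The endpoint $y=1$ gives $\frac49+\frac{10}9=\frac{14}9\approx1.5556<1.56$, and $y=3/4$ gives about $1.53$.
\item On $[y_0,3/4]$, multiplying by $y$ again turns $\Psi(y)\le1.56$ into a polynomial inequality of degree two. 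Spot checks ($y=1/2$ gives about $1.35$) indicate ample slack.
\end{itemize}

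\textbf{Main obstacle.} The only delicate point is the interval near $y=1$, where the margin is smallest, roughly $0.004$. There I would check the sign of the quadratic $y\,(1.56-\Psi(y))$ exactly, either via its vertex and endpoint values or by showing it is concave with nonnegative endpoint values on $[3/4,1]$. If the margin failed, I would adjust $s$ slightly, noting that $y=1$ requires $(2a-1)/(3a-s)\ge0.44$, i.e.\ $s\ge0.728$ when $a=1$. I would rebalance between this requirement and the middle range, where larger $s$ increases the $(s-y)_+$ term.
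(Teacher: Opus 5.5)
\textbf{Gap: the parameters $a=1$, $s=3/4$ do not satisfy $\Psi\le1.56$.} Your reduction is the right one and matches the paper. You apply Lemma~\ref{lem:scalar} with fixed $s,a$ and the tight distribution function $F(y)=\max\{0,\tfrac{39}{25}-\tfrac{14}{25y}\}$. The penalty condition then holds automatically, and everything rests on $\Psi\le1.56$. On $[3/4,1]$ your function is $\Psi(y)=(914-312y-252/y)/225$. This is strictly concave, so the endpoint checks at $y=3/4$ and $y=1$ say nothing about the interior. Its maximum is near $y=\sqrt{252/312}\approx0.899$, and already $\Psi(9/10)=1766/1125\approx1.5698>1.56=1755/1125$. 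So the claim that the margin is smallest near $y=1$ is false: the margin is negative in the middle of the last interval, and the proof fails as proposed.

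\textbf{Your fallback cannot work either.} Adjusting $s$ while keeping $a=1$ does not help. With $a=1$ you have $\lambda=1/(3-s)$; write $\rho=1.56$ and $p=0.56$. Dropping the nonnegative $(s-y)_+$ term gives $\Psi(y)\ge\lambda+(2-2\lambda y)(\rho-p/y)$. The maximum of this lower bound over $[14/39,1]$ is $\lambda+2\rho+2\lambda p-4\sqrt{\lambda\rho p}$, or $2-\lambda$ at $y=1$ when $\lambda<p/\rho$. It exceeds $1.56$ unless roughly $\lambda\ge0.459$, that is, $s\ge0.822$. But then on $[14/39,s]$ the peak of $\Psi$ is $\lambda(1+2p+2s)+2\rho-4\sqrt{\lambda p(\rho+1)}$, attained near $y\approx0.69$, and this is at least about $1.60$.

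Changing $F$ does not rescue this. Every admissible $F$ dominates yours, and $\Psi$ is increasing in $F$ because $2-2\lambda\min\{y,a\}\ge1$. Hence no fixed $s$ with $a=1$ certifies $1.56$.

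\textbf{The missing idea is the truncation $a<1$.} The paper takes $s=27/35$ and $a=34/35$. For comparable $s$ this makes $\lambda=1/(3a-s)=7/15$ larger. On $[a,1]$ the coefficient freezes at $2-2\lambda a$, so $\Psi(y)\le\lambda+2-2\lambda a=39/25$ exactly. On $[s,a]$ and $[14/39,s]$ the AM--GM peaks are $(1541-84\sqrt{130})/375\approx1.5553$ and $(1811-224\sqrt{30})/375\approx1.5576$, both below $1.56$.
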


\begin{proof}
Put $\rho_0=39/25=1.56$ and fix
\[
 s=\frac{27}{35},\qquad a=\frac{34}{35},\qquad
 \lambda=\frac{1}{3a-s}=\frac7{15}.
\]
Let $d_0=14/39$ and choose $\gamma$ independently of $T$ with distribution function
\[
 F(y)=
 \begin{cases}
  0,&0\le y<d_0,\\
  \displaystyle\frac{39}{25}-\frac{14}{25y},&d_0\le y\le1.
 \end{cases}
\]
This function is nondecreasing, with $F(d_0)=0$ and $F(1)=1$. Together with the fixed values of $s,a$, it defines a joint distribution $\eta$ of the form allowed in Lemma~\ref{lem:scalar}. For $y<d_0$, we have $1-yF(y)=1\le\rho_0(1-y)$; for $y\ge d_0$, we have $1-yF(y)=\rho_0(1-y)$. Thus the penalty inequality in~\eqref{eq:scalar-conditions} holds with factor $\rho_0$.

Since $s,a$ are fixed, the length function in~\eqref{eq:kernel} becomes
\[
 \Psi(y)=\lambda+\bigl(2-2\lambda\min\{y,a\}\bigr)F(y)+2\lambda(s-y)_+.
\]
These parameters make $\Psi(1)=2+\lambda(1-2a)=\rho_0$. We verify $\Psi(y)\le\rho_0$ on the four intervals determined by $0<d_0<s<a<1$.

For $0\le y\le d_0$, we have $F(y)=0$, so
\[
 \Psi(y)=\lambda+2\lambda(s-y)
 \le\lambda+2\lambda s
 =\frac{89}{75}
 <\frac{39}{25}.
\]

For $d_0\le y\le s$, expansion and the arithmetic--geometric mean inequality give
\[
 \Psi(y)=\frac{1811-896y-420/y}{375}
 \le\frac{1811-224\sqrt{30}}{375}
 <\frac{39}{25}.
\]
The last inequality follows from $224^2\cdot30-1226^2=2204>0$, which gives $224\sqrt{30}>1226$.

For $s\le y\le a$, the arithmetic--geometric mean inequality similarly gives
\[
 \Psi(y)=\frac{1541-546y-420/y}{375}
 \le\frac{1541-84\sqrt{130}}{375}
 <\frac{39}{25},
\]
since $84^2\cdot130-956^2=3344>0$ implies $84\sqrt{130}>956$.

Finally, for $a\le y\le1$, we have $(s-y)_+=0$ and $\min\{y,a\}=a$. Since $F(y)\le1$ and $2-2\lambda a\ge0$,
\[
 \Psi(y)\le\lambda+2-2\lambda a=\frac{39}{25}.
\]

Both inequalities in~\eqref{eq:scalar-conditions} therefore hold with
$B_{\mathrm{length}}=B_{\mathrm{penalty}}=\rho_0$, and
$L=\lambda=7/15<\infty$. Lemma~\ref{lem:scalar} proves the claimed bound. The polynomial running time was established in Section~\ref{sec:bkn}.
\end{proof}

\subsection{An improved guarantee via parameter search and exact verification}\label{sec:explicit-distribution}

We now allow $s,a$ to vary and to be correlated with $\gamma$, rather than keeping them fixed as in Section~\ref{sec:simple-guarantee}. We prove the stronger guarantee in Theorem~\ref{thm:main} by specifying a joint distribution $\eta$ satisfying~\eqref{eq:scalar-conditions} with $B_{\mathrm{length}}=B_{\mathrm{penalty}}=\rho:=1.555761$. The parameter choices are obtained by numerical search, and the resulting bound is verified using exact rational arithmetic.

Set $p=\rho-1$ and $d=p/\rho$. For $y>0$, the inequality $1-yF(y)\le\rho(1-y)$ is equivalent to $F(y)\ge\rho-p/y$. Together with $F(y)\ge0$, this gives a pointwise lower bound on the distribution function $F$. The following observation is not needed for the proof of Theorem~\ref{thm:main}; it explains why we may restrict attention to the choice in~\eqref{eq:gamma}. For fixed $s,a$, increasing $\gamma$ cannot increase the expression inside the expectation in~\eqref{eq:kernel}, since $2-2\lambda\min\{y,a\}$ is nonnegative and the other terms are unchanged. Any joint distribution satisfying the penalty inequality can be coupled with one whose pruning threshold has the pointwise smallest admissible distribution function, so that the new pruning threshold is never smaller and $s,a$ remain unchanged. Such a coupling is obtained by using the same uniform random variable in the quantile representations of the two pruning thresholds and sampling $(s,a)$ from the original conditional distribution given the original pruning threshold. This does not increase $\Psi(y)$ for any $y$. We therefore take the pointwise smallest admissible choice,
\begin{equation}\label{eq:gamma}
 F(y)=
 \begin{cases}
 0,&0\le y<d,\\
 \rho-p/y,&d\le y\le1.
 \end{cases}
\end{equation}

This function is nondecreasing, with $F(d)=0$ and $F(1)=1$, and is a valid distribution function. A convenient realization is
\[
 \xi\sim\operatorname{Unif}[0,1],
 \qquad
 \gamma=\frac{p}{\rho-\xi}.
\]
For $y<d$, the inequality $1\le\rho(1-y)$ holds, while for $y\ge d$ the choice in~\eqref{eq:gamma} makes the penalty inequality tight. This gives
\begin{equation}\label{eq:penalty}
 1-yF(y)\le\rho(1-y)\qquad\forall y\in[0,1].
\end{equation}
The expected penalty already has the desired factor $\rho$.

It remains to choose $s,a$ so that $\Psi(y)<\rho$ for every $y\in[0,1]$. Unlike the penalty bound, which depends only on the marginal distribution of $\gamma$, the length bound depends on the full joint distribution of $\gamma,s,a$.

Partition the interval $[0,1]$ for $\xi$ into $N=180$ intervals of width $w=1/N$, indexed by $j=0,\ldots,N-1$. The choice $N=180$ and the data in Appendix~\ref{app:certificate}, consisting of $190$ rows $(j_\ell,s_\ell,a_\ell,q_\ell)$ using $17$ distinct pairs $(s_\ell,a_\ell)$, were obtained by numerical search. How these data were found is not used in the proof: once they are fixed, the exact verification below certifies the resulting bound. More generally, any finite data with $j_\ell\in\{0,\ldots,N-1\}$ satisfying $0\le s_\ell\le a_\ell\le1$, $a_\ell>0$, $q_\ell>0$, and $\sum_{\ell:j_\ell=j}q_\ell=1$ for every interval define an admissible joint distribution of this form, although the resulting bound on $\Psi$ may differ.

Conditional on $\xi\in[jw,(j+1)w)$, select row $\ell$ with $j_\ell=j$ with probability $q_\ell$, independently of the position of $\xi$ within that interval, and set $(s,a)=(s_\ell,a_\ell)$. The interval with $j=N-1$ is taken to be $[(N-1)w,1]$. This allows the distribution of $(s,a)$ to depend on the range of $\gamma$ represented by the interval containing $\xi$, while preserving the marginal distribution of $\gamma$ fixed above.

For each $j$, the data satisfy $\sum_{\ell:j_\ell=j}q_\ell=1$. Since each interval has probability $w$, the unconditional probability of selecting row $\ell$ is $wq_\ell$. Together with $\gamma=p/(\rho-\xi)$, these choices specify the joint distribution $\eta$.

Write $\lambda_\ell=1/(3a_\ell-s_\ell)$ and $A_j(y):=\max\{0,\min\{w,F(y)-jw\}\}$. For $y\in[d,1]$, since $\gamma=p/(\rho-\xi)$ is increasing in $\xi$, we have $\gamma\le y$ exactly when $\xi\le\rho-p/y=F(y)$; for $y<d$, the event $\{\gamma\le y\}$ is empty. Hence $A_j(y)$ is the probability that $\xi$ lies in interval $j$ and $\gamma\le y$, namely the length of the part of that interval contained in $[0,F(y)]$. Accordingly, $q_\ell A_{j_\ell}(y)$ is the probability that row $\ell$ is selected and $\gamma\le y$. Using these probabilities for the term containing $\mathbf{1}_{\{\gamma\le y\}}$ in~\eqref{eq:kernel}, and $wq_\ell$ for the remaining terms, gives
\begin{equation}\label{eq:explicit-psi}
\begin{split}
 L&=w\sum_\ell q_\ell\lambda_\ell,\\
 \Psi(y)&=L+\sum_\ell q_\ell A_{j_\ell}(y)
       \bigl(2-2\lambda_\ell\min\{y,a_\ell\}\bigr)
       +2w\sum_\ell q_\ell\lambda_\ell\left(s_\ell-y\right)_{+}.
\end{split}
\end{equation}
The partition into $N=180$ intervals is used only to specify the joint distribution $\eta$; it does not discretize the argument $y$. We must verify $\Psi(y)<\rho$ for every $y\in[0,1]$. The expression in~\eqref{eq:explicit-psi} has only finitely many analytic forms, so its maximum can be checked exactly on each interval where the form is fixed.

\begin{lemma}[Rational certificate]\label{lem:numerical}
For the specified rows, $\Psi(y)<\rho$ for every $y\in[0,1]$.
\end{lemma}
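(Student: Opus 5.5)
The plan is to exploit the fact that $\Psi$ in~\eqref{eq:explicit-psi} is piecewise of a very simple analytic form, and to certify $\Psi(y)<\rho$ exactly on each piece. First I will collect the finitely many breakpoints in $y$ at which the form of $\Psi$ can change. These are $d$; the points $s_\ell$ and $a_\ell$ (where $(s_\ell-y)_+$ and $\min\{y,a_\ell\}$ switch); and the points where $F(y)=jw$ for $j=0,\ldots,N$, i.e.\ $y_j:=p/(\rho-jw)$ (where $A_j$ switches among $0$, $F(y)-jw$ and $w$). All of these are rational, since $\rho,p,w,s_\ell,a_\ell$ are rational. Sorting them partitions $[0,1]$ into finitely many closed intervals $[\alpha,\beta]$. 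On each interval, every indicator, minimum and positive part has a fixed branch, and $F(y)=\rho-p/y$ (or $0$ for $y<d$).

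On each such interval, expanding gives
\[
\Psi(y)=c_0+c_1y-\frac{c_2}{y}
\]
with rational coefficients computed exactly. The $1/y$ term arises only from products $A_j(y)\cdot 2$ and $A_j(y)\cdot(-2\lambda_\ell y)$ with $A_j=F-jw$, and the latter contributes a constant. Explicitly, the $-p/y$ part gives $c_2=2p\sum q_\ell$ over the active rows whose $A_{j_\ell}$ is in its middle branch and whose $y\le a_\ell$ factor is constant. In general $c_2\ge0$, because each coefficient $2-2\lambda_\ell\min\{y,a_\ell\}$ is nonnegative. Hence $\Psi$ is concave on each interval when $c_2\ge0$. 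Its maximum on $[\alpha,\beta]$ is then attained either at an endpoint or at the stationary point $y^\ast=\sqrt{c_2/(-c_1)}$ when $c_1<0$ and $y^\ast\in[\alpha,\beta]$. At that point the value is $c_0-2\sqrt{-c_1c_2}$. Whenever $c_0>\rho$, I will certify the inequality exactly by the squared comparison $4(-c_1)c_2>(c_0-\rho)^2$, exactly as in the proof of Theorem~\ref{thm:simple}. If $c_1\ge0$ on an interval, $\Psi$ is nondecreasing there, so only the right endpoint needs checking.

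Endpoint values are rational, so checking $\Psi(\alpha)<\rho$ and $\Psi(\beta)<\rho$ is exact rational arithmetic. The region $y<d$ is handled separately: there all $A_j=0$, so $\Psi(y)=L+2w\sum_\ell q_\ell\lambda_\ell(s_\ell-y)_+\le L+2w\sum_\ell q_\ell\lambda_\ell s_\ell$, a single rational number to compare with $\rho$. Continuity of $\Psi$ is not actually needed, since every closed piece is checked.

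The main obstacle is bookkeeping rather than ideas. There are on the order of $180+34$ breakpoints and $190$ rows, and for each interval the active branch of every row must be determined correctly. I would organize the verification as a program in exact rational arithmetic. For each interval it selects the branch of each row by evaluating at the rational midpoint, accumulates $c_0,c_1,c_2$, and performs the endpoint and stationary-point tests with integer squaring to avoid irrationals. The lemma then follows from the fact that every test passes strictly. The printed data in Appendix~\ref{app:certificate} make this check reproducible.
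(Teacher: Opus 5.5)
Your plan matches the paper's proof: the same breakpoints $d$, $p/(\rho-jw)$, $s_\ell$, $a_\ell$, the same form $\Psi=C_0+C_1y+C_{-1}/y$ on each piece, concavity, endpoint checks, the squared rational test $(C_0-\rho)^2<4C_1C_{-1}$ at an interior stationary point, and the bound by $\Psi(0)$ on $[0,d]$. The only differences are small. You derive $C_{-1}<0$ structurally from $2-2\lambda_\ell\min\{y,a_\ell\}\ge1$, whereas the paper checks it numerically. Your remark that continuity is not needed is slightly off: using one formula on each closed piece is exactly the continuity of $A_j$, $\min\{y,a_\ell\}$ and $(s_\ell-y)_+$ at the breakpoints, which is harmless here because these functions are continuous.
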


\begin{proof}
The verification program in Appendix~\ref{app:certificate} uses exact rational arithmetic. It first checks $0\le s_\ell\le a_\ell\le1$, $a_\ell>0$, $q_\ell>0$, and $\sum_{\ell:j_\ell=j}q_\ell=1$ for every $j$, verifying that the data define a distribution with the required parameter ranges.

On $[0,d]$, all $A_j$ vanish. The only terms in~\eqref{eq:explicit-psi} that depend on $y$ are then positive multiples of $(s_\ell-y)_+$, so $\Psi$ is nonincreasing. The check $\Psi(0)<\rho$ proves the bound throughout this interval.

For $y\in[d,1]$, the formulas for the functions $A_j(y)$ change only when $F(y)$ passes an endpoint of the partition for $\xi$. Since $F(y)=\rho-p/y$ on this interval, solving $F(y)=j/N$ gives $y=p/(\rho-j/N)$ for $j=0,\ldots,N$. The factors $\min\{y,a_\ell\}$ and $(s_\ell-y)_+$ change their forms at $y=a_\ell$ and $y=s_\ell$, respectively. We therefore partition $[d,1]$ at the rational points
\[
 \{d,1\}\ \cup\ \left\{\frac{p}{\rho-j/N}:0\le j\le N\right\}
 \ \cup\ \{b\in(d,1):b=s_\ell\text{ or }b=a_\ell\text{ for some }\ell\}.
\]
Between successive breakpoints $b<b'$, each $A_j(y)$ is zero, $w$, or $\rho-p/y-jw$, while the remaining factors are constant or affine in $y$. Thus the products involving $A_j(y)$ in~\eqref{eq:explicit-psi} have the form $(\alpha+\beta/y)(u+vy)$ for constants $\alpha,\beta,u,v$ on this interval, while the other summands are affine in $y$. Collecting terms gives
\begin{equation}\label{eq:rational-piece}
 \Psi(y)=C_0+C_1y+C_{-1}/y
\end{equation}
with rational coefficients $C_0,C_1,C_{-1}$. The program computes these coefficients exactly by algebraic expansion. As a consistency check, it compares the resulting expression with a direct evaluation of~\eqref{eq:explicit-psi} at $b$, $(b+b')/2$, and $b'$. It also verifies $\Psi(y)<\rho$ at these points.

For the specified data, $C_{-1}<0$ on every interval, so $\Psi''(y)=2C_{-1}/y^3<0$ throughout its interior. Each piece is therefore strictly concave and has at most one interior maximum. Differentiating~\eqref{eq:rational-piece} gives $\Psi'(y)=C_1-C_{-1}/y^2$, so an interior maximum occurs only if $C_1<0$ and $b^2<C_{-1}/C_1<(b')^2$. Its location is $y=\sqrt{C_{-1}/C_1}$ and its value is $C_0-2\sqrt{C_1C_{-1}}$. Although this location need not be rational, the value is below $\rho$ precisely when either $C_0\le\rho$ or $C_0>\rho$ and $(C_0-\rho)^2<4C_1C_{-1}$. These comparisons use only rational arithmetic. If there is no interior maximum, checking the endpoints suffices.

The function in~\eqref{eq:explicit-psi} is continuous, so the endpoint checks also cover every breakpoint. The program verifies the bound on the $200$ intervals of this partition, checking $18$ interior maxima and performing $600$ rational consistency checks of~\eqref{eq:rational-piece} against~\eqref{eq:explicit-psi}. Together with the check on $[0,d]$, this proves $\Psi(y)<\rho$ for every $y\in[0,1]$.
\end{proof}

\begin{proof}[Proof of Theorem~\ref{thm:main}]
The specified distribution uses finitely many pairs $(s,a)$ with $3a-s>0$, so $L<\infty$. Lemma~\ref{lem:numerical} supplies the bound on $\Psi$, while~\eqref{eq:penalty} supplies the bound on the expected penalty. Lemma~\ref{lem:scalar} applies with $B_{\mathrm{length}}=B_{\mathrm{penalty}}=\rho$, giving $\mathrm{ALG}\le\rho\mathrm{OPT}_{\mathrm{LP}}\le\rho\mathrm{OPT}$. The polynomial running time was established in Section~\ref{sec:bkn}.
\end{proof}

\begin{remark}[The limit of optimizing the parameter distribution]\label{rem:lower}
The remaining freedom within~\eqref{eq:scalar-conditions} is the choice of $\eta$. There is no loss in considering a common factor for the two inequalities. Indeed, if~\eqref{eq:scalar-conditions} holds with possibly different factors $B_{\mathrm{length}}$ and $B_{\mathrm{penalty}}$, then setting $\beta:=\max\{B_{\mathrm{length}},B_{\mathrm{penalty}}\}$ preserves both inequalities, and Lemma~\ref{lem:scalar} gives
\[
 \mathrm{ALG}\le\beta\left(c^\top x^*+\sum_{v\ne r}\pi_v(1-y_v^*)\right)
 =\beta\mathrm{OPT}_{\mathrm{LP}}.
\]
Thus any approximation factor obtainable through~\eqref{eq:scalar-conditions} can be represented by a common factor $\beta$. Even if we allow arbitrary correlations among $\gamma,s,a$, rather than the particular distribution above, such a common factor must satisfy
\begin{equation}\label{eq:lower-main}
 \beta\ge\frac{196969687}{126617839}\approx1.555623509.
\end{equation}
Here $\eta$ ranges over the full parameter range of Lemma~\ref{lem:scalar} and remains independent of $T$. Appendix~\ref{app:lower} proves the bound by combining the inequalities for $\Psi(7/10)$, $\Psi(8/9)$, and $\Psi(1)$ with the inequality for the expected penalty.

Our factor $1.555761$ is within $1.38\cdot10^{-4}$ of this lower bound. A larger improvement cannot be obtained merely by choosing another distribution satisfying the same inequalities. This does not give a lower bound on the approximation ratio of Algorithm~\ref{alg:bkn} or on the LP integrality gap; it identifies a limitation of~\eqref{eq:scalar-conditions}.
\end{remark}

\begin{remark}[Running-time comparison with BKN]\label{rem:shared-decomposition}
Omitting the splitting-off preprocessing before the tree decomposition simplifies Algorithm~\ref{alg:bkn}, but this alone does not imply an asymptotic improvement in running time over a slightly modified implementation of the BKN algorithm. We observe that, with a minor modification of the construction in BKN~\cite[Section~6, Lemma~5 and its proof]{BKN2026}, the tree decompositions needed for the different splitting thresholds can be obtained from one sequence of complete splittings and its reversal.

Indeed, the construction in BKN introduces an auxiliary copy $r'$ of the root $r$, performs complete splittings at every vertex other than $r$ and $r'$, and then reverses these complete splittings to recover the tree decomposition. In the proof of~\cite[Lemma~5]{BKN2026}, the vertices are completely split in nondecreasing order of $y_v^*$. During the reversal, the decomposition corresponding to a splitting threshold $\delta$ can therefore be recorded after all vertices with $y_v^*\ge\delta$ have been restored and before those with $y_v^*<\delta$ are restored. Thus one sequence of complete splittings and its reversal suffice for all splitting thresholds. The corresponding cores and minimum-length perfect matchings still have to be computed, so omitting the splitting-off preprocessing does not by itself establish an asymptotic improvement in running time over this slightly modified implementation of BKN.
\end{remark}

\section{Conclusion}\label{sec:conclusion}

We gave a simple analysis establishing an LP-relative $1.56$-approximation for a simpler version of the BKN algorithm that omits the splitting-off preprocessing before the tree decomposition. Numerical parameter search followed by exact verification then improved the guarantee to $1.555761$. BKN previously proved a guarantee of $1.599$ for their algorithm and reported computational evidence suggesting that an analysis following their proof could not achieve a factor of $1.59$. Our analysis shows that the same algorithmic framework admits a substantially stronger guarantee through a different treatment of parity correction.

The lower bound of approximately $1.555623$ established here applies only to the inequalities in~\eqref{eq:scalar-conditions}, not to the algorithm itself. Obtaining a stronger guarantee for the same algorithm through an analysis beyond these inequalities is a direction for future work.

\paragraph{AI use.}
The author used OpenAI's ChatGPT for text editing and to assist with the numerical parameter search in Section~\ref{sec:explicit-distribution}. The reported parameters and the resulting bound are verified using exact rational arithmetic in Appendix~\ref{app:certificate}. The author is responsible for all mathematical arguments, computations, and other content of the paper.

\setlength{\bibsep}{6pt plus 1pt minus 1pt}

\clearpage
\appendix
\section{Numerical certificate}\label{app:certificate}

The following Python program carries out the verification in Lemma~\ref{lem:numerical} using exact rational arithmetic. All parameter values and conditional probabilities are included. Concatenate the two code blocks in order and run the resulting file; only the Python standard library is required.

The program uses the following notation from Section~\ref{sec:explicit-distribution}.

\begin{center}
\small
\renewcommand{\arraystretch}{1.15}
\begin{tabular}{@{}p{0.17\linewidth}p{0.77\linewidth}@{}}
\toprule
Program variable & Mathematical notation and meaning \\
\midrule
\texttt{RHO} & $\rho=1555761/10^6$, the factor to be verified. \\
\texttt{p} & $p=\rho-1$. \\
\texttt{N}, \texttt{w} & $N=180$ and $w=1/N$, the number and width of the intervals for $\xi$. \\
\texttt{d} & $d=p/\rho$, the lower endpoint of the support of $\gamma$. \\
\texttt{rows} & The $190$ tuples $(j_\ell,s_\ell,a_\ell,q_\ell,\lambda_\ell)$, where $q_\ell$ is conditional on interval $j_\ell$. \\
\texttt{lam} & $\lambda_\ell=1/(3a_\ell-s_\ell)$ for the current row. \\
\texttt{L} & $L=w\sum_\ell q_\ell\lambda_\ell=\mathbb{E}_\eta[\lambda]$. \\
\texttt{f} & $F(y)$ in~\eqref{eq:gamma}, the distribution function of $\gamma$. \\
\texttt{Psi(y)} & $\Psi(y)$ in~\eqref{eq:explicit-psi}. \\
\bottomrule
\end{tabular}
\end{center}

The lists \texttt{s\_num} and \texttt{a\_num} contain the numerators of the $17$ parameter pairs $(s,a)$, with denominator $10^{12}$. A parameter code $i\in\{1,\ldots,17\}$ selects the $i$th pair. The list \texttt{codes} specifies one code for each of the $180$ intervals for $\xi$. A zero code means that several pairs may be selected in that interval, as specified in \texttt{mixed}. Each entry of \texttt{mixed} gives a parameter code and the numerator of its conditional probability, with denominator $10^{15}$.

The first block converts these data into the $190$ rows $(j_\ell,s_\ell,a_\ell,q_\ell)$ used in~\eqref{eq:explicit-psi}, appending $\lambda_\ell$ to each row. Its assertions check the parameter ranges and verify that the conditional probabilities sum to one in every interval.

\subsection*{Embedded data and normalization}
\begin{lstlisting}
#!/usr/bin/env python3
"""Standalone exact certificate: all inputs are embedded; no external files."""
from fractions import Fraction as Frac

if not __debug__:
    raise RuntimeError('Run without -O.')

RHO = Frac(1555761,1000000)
p, N, w = RHO-1, 180, Frac(1,180)
d = p/RHO

# Parameter codes are 1-based indices into s_num and a_num.
s_num = [
    689733915806,691666666667,694522510914,697222222222,699378062149,
    702777777778,709295729909,711111111111,719444444444,719498722640,
    722222222222,724711091535,725000000000,780000000000,840000000000,
    866666666667,888888888889,
    ]
a_num = [10**12]*17
a_num[13], a_num[14], a_num[16] = 980000000000, 900000000000, 911111111111

# One code for each successive interval. Code 0 means a mixture specified below.
codes = list(map(int,"""
14 14 14 15 14 16 14 14 14 16 16 14 14 14 16 14 14 14 14 0 14 14 14 15 14 14 14 15 14 16
14 14 14 14 14 14 14 14 14 14 14 14 14 14 16 14 15 14 16 14 15 14 14 16 14 15 14 14 14 14
14 14 14 14 14 14 14 14 14 14 16 14 14 14 14 14 14 14 16 14 16 16 16 15 14 15 14 16 14 14
14 14 14 14 14 14 14 14 15 15 16 14 14 16 14 14 16 14 14 14 14 14 14 14 14 16 14 14 16 14
14 14 14 0 14 16 14 0 14 14 14 14 14 14 13 12 0 11 11 0 10 0 9 6 7 7 7 7 7 7
7 7 7 7 7 6 7 7 6 7 7 7 0 7 8 8 8 5 3 3 0 3 2 2 3 0 4 3 1 3
""".split()))

# Each (code,q_num) has conditional probability q_num/10**15
# in the indicated interval.
mixed = {
    19: [(14,712027383681948),(15,287972616318052)],
    123: [(14,777964679557925),(16,222035320442075)],
    127: [(14,153457938940994),(17,846542061059006)],
    136: [(11,477430791104914),(12,522569208895086)],
    139: [(10,323584640263160),(11,676415359736840)],
    141: [(9,630594228766283),(10,369405771233717)],
    162: [(6,530741795897140),(7,469258204102860)],
    170: [(1,233943342680798),(4,766056657319202)],
    175: [(2,242325637952615),(4,381167467858204),(5,376506894189181)],
    }

assert len(codes) == N and set(mixed) == {j for j,k in enumerate(codes) if k == 0}

rows = []
for j,k in enumerate(codes):
    for i,q_num in ([(k,10**15)] if k else mixed[j]):
        assert 1 <= i <= len(s_num)
        s,a,q = (
            Frac(s_num[i-1],10**12),
            Frac(a_num[i-1],10**12),
            Frac(q_num,10**15),
        )
        assert 0 <= s <= a <= 1 and a > 0 and q > 0
        rows.append((j,s,a,q,1/(3*a-s)))

assert len(rows) == 190
assert len({(s,a) for j,s,a,q,lam in rows}) == 17
assert all(sum(q for j,s,a,q,lam in rows if j == k) == 1 for k in range(N))
\end{lstlisting}

\subsection*{Verification over the entire interval}

The second block evaluates $\Psi(y)$ from~\eqref{eq:explicit-psi} and forms the partition used in the proof of Lemma~\ref{lem:numerical}. Only values $s_\ell,a_\ell\in(d,1)$ are added as interior breakpoints: values at or below $d$ cause no change of formula within $(d,1)$, while $d$ and $1$ are already included as endpoints.

On an interval $(b,b')$ of the partition, set $y_0=(b+b')/2$ and $k=\lfloor NF(y_0)\rfloor$. Then $A_j(y)$ equals $w$ for $j<k$, $\rho-p/y-jw$ for $j=k$, and $0$ for $j>k$; the code writes these expressions as $\texttt{A0}+\texttt{Ainv}/y$. Similarly, it writes $2-2\lambda_\ell\min\{y,a_\ell\}$ as $\texttt{G0}+\texttt{G1}\,y$. Expanding these products and adding the terms involving $(s_\ell-y)_+$ gives $\Psi(y)=C_0+C_1y+C_{-1}/y$, whose coefficients are stored as \texttt{C0}, \texttt{C1}, and \texttt{Cm1}. The comparisons at the two endpoints and the midpoint check this expression against direct evaluation of $\Psi(y)$; the remaining test checks any interior maximum by the rational comparison described in the proof of Lemma~\ref{lem:numerical}.

\begin{lstlisting}
L = sum(w*q*lam for j,s,a,q,lam in rows)

def Psi(y):
    f = max(Frac(0),RHO-p/y) if y else Frac(0)
    return L+sum(
        q*max(Frac(0),min(w,f-j*w))*(2-2*lam*min(y,a))
        +2*w*q*lam*max(Frac(0),s-y)
        for j,s,a,q,lam in rows
    )

breaks = sorted(
    {d,Frac(1)}
    | {p/(RHO-j*w) for j in range(N+1)}
    | {value for j,s,a,q,lam in rows for value in (s,a) if d < value < 1}
)

assert Psi(Frac(0)) < RHO
count = interior_maxima = 0

for b,bp in zip(breaks,breaks[1:]):
    y0 = (b+bp)/2
    k = int((RHO-p/y0)*N)
    assert 0 <= k < N

    C0,C1,Cm1 = L,Frac(0),Frac(0)

    for j,s,a,q,lam in rows:
        if j < k:
            A0,Ainv = w,Frac(0)
        elif j == k:
            A0,Ainv = RHO-j*w,-p
        else:
            A0,Ainv = Frac(0),Frac(0)

        G0,G1 = (Frac(2),-2*lam) if y0 < a else (2-2*lam*a,Frac(0))

        C0 += q*(A0*G0+Ainv*G1)
        C1 += q*A0*G1
        Cm1 += q*Ainv*G0

        if y0 < s:
            C0 += 2*w*q*lam*s
            C1 -= 2*w*q*lam

    assert Cm1 < 0

    for y in (b,y0,bp):
        assert C0+C1*y+Cm1/y == Psi(y) < RHO
        count += 1

    if C1 < 0 and b*b < Cm1/C1 < bp*bp:
        assert C0 <= RHO or (C0-RHO)**2 < 4*C1*Cm1
        interior_maxima += 1

print('PASSED:',len(breaks)-1,'intervals;',interior_maxima,
      'interior maxima;',count,'rational cross-checks.')
\end{lstlisting}

The output is \texttt{PASSED: 200 intervals; 18 interior maxima; 600 rational cross-checks.} The test at zero, together with monotonicity on $[0,d]$, covers the initial interval. The endpoint and interior maximum checks cover $[d,1]$, certifying $\Psi(y)<\rho$ throughout $[0,1]$.

\section{A lower bound for the scalar analysis}\label{app:lower}

Let $\eta$ be any joint distribution allowed in Lemma~\ref{lem:scalar}, with $\mathbb{E}_\eta[\lambda]<\infty$, and suppose~\eqref{eq:scalar-conditions} holds with $B_{\mathrm{length}}=B_{\mathrm{penalty}}=\beta$. We prove~\eqref{eq:lower-main} without restricting the correlations among $\gamma,s,a$. The idea is to combine the bounds on $\Psi$ at three arguments, then use the inequality for the expected penalty to limit how much probability can be assigned to the intervals where this combination is smaller.

For fixed $\gamma,s,a$, let $k_y(\gamma,s,a)$ be the expression inside the expectation in~\eqref{eq:kernel}, so $\Psi(y)=\mathbb{E}_\eta[k_y(\gamma,s,a)]$. Choose $(y_1,y_2,y_3)=(7/10,8/9,1)$ and $(\omega_1,\omega_2,\omega_3)=(9/50,277/1000,543/1000)$, and define $D(\gamma,s,a):=\sum_{i=1}^3\omega_i k_{y_i}(\gamma,s,a)$. These arguments and weights were selected using numerical optimization of the bound. Their optimality is not needed for the argument: for these fixed arguments, any nonnegative weights summing to one define a valid lower-bound certificate, although the resulting bound may be weaker. Since the coefficients $\omega_i$ are nonnegative and sum to one, the inequalities $\Psi(y_i)\le\beta$ give $\mathbb{E}_\eta[D(\gamma,s,a)]=\sum_{i=1}^3\omega_i\Psi(y_i)\le\beta$. We will obtain a lower bound on the same expectation.

The dependence of $D$ on $\gamma$ is through the three indicators $\mathbf{1}_{\{\gamma\le y_i\}}$. Their values are constant on $[0,y_1]$, $(y_1,y_2]$, and $(y_2,1]$. On the $j$th interval, exactly the indicators with $i\ge j$ equal one, so
\[
 D(\gamma,s,a)=D_j(s,a):=
 2\sum_{i=j}^3\omega_i+
 \frac{1-2\sum_{i=j}^3\omega_i\min\{y_i,a\}
          +2\sum_{i=1}^3\omega_i(s-y_i)_+}{3a-s}.
\]
For $j=1,2$, we have
\[
 D_j(s,a)-D_{j+1}(s,a)
 =2\omega_j\bigl(1-\lambda\min\{y_j,a\}\bigr)
 \ge\omega_j,
\]
because $\lambda\min\{y_j,a\}\le\lambda a\le1/2$. Thus $D_1\ge D_2\ge D_3$ pointwise, with strict inequalities for the positive weights chosen here.

Let $m_j$ denote the minimum of $D_j(s,a)$ over $0\le s\le a\le1$ with $a>0$. This minimum is a lower bound for $D(\gamma,s,a)$ on the entire $j$th interval, regardless of how $s,a$ depend on $\gamma$. The values are
\begin{equation}\label{eq:three-minima}
 m_1=\frac{15569}{9500},\qquad
 m_2=\frac{14371}{10350},\qquad
 m_3=\frac{12059}{11500}.
\end{equation}
The preceding bound also gives $m_1\ge m_2+\omega_1>m_2$ and $m_2\ge m_3+\omega_2>m_3$ for the positive weights chosen here.

To compute these minima, partition the triangle $\{(s,a):0\le s\le a\le1\}$ by the lines $s=y_1$, $s=y_2$, $a=y_1$, and $a=y_2$. On each resulting polygon, every expression $\min\{y_i,a\}$ and $(s-y_i)_+$ has a fixed affine form, so $D_j$ is a ratio of affine functions. The nonnegative terms in~\eqref{eq:kernel} give $k_{y_i}(\gamma,s,a)\ge\lambda\ge1/(3a)$, and hence $D_j(s,a)\ge2$ whenever $a\le1/6$. On the other hand, by the pointwise ordering above and a direct evaluation, $D_j(0,1)\le D_1(0,1)=23263/13500<2$ for $j=1,2,3$. Thus no point with $a\le1/6$ attains the minimum, and it suffices to minimize over $\{(s,a):0\le s\le a\le1,\ a\ge1/6\}$, on which $3a-s\ge2a\ge1/3>0$. A linear-fractional function with positive denominator on a polytope attains its minimum at a vertex. After subdividing this polytope by the four partition lines above, the vertices with $a>1/6$ that need to be checked are exactly the nine pairs satisfying
\[
 s\in\{0,y_1,y_2,1\},\qquad a\in\{y_1,y_2,1\},\qquad s\le a.
\]
Evaluating these pairs gives the following table, with the minimum in each column shown in bold.

\begin{center}
\small
\renewcommand{\arraystretch}{1.25}
\begin{tabular}{ccccc}
\toprule
$s$ & $a$ & $D_1$ & $D_2$ & $D_3$ \\
\midrule
$0$ & $7/10$ & $38/21$ & $824/525$ & $6301/5250$ \\
$7/10$ & $7/10$ & $12/7$ & $537/350$ & $8801/7000$ \\
$0$ & $8/9$ & $10403/6000$ & $881/600$ & $1099/1000$ \\
$7/10$ & $8/9$ & $7253/4425$ & $6227/4425$ & $32557/29500$ \\
$8/9$ & $8/9$ & $1639/1000$ & $5683/4000$ & $183/160$ \\
$0$ & $1$ & $23263/13500$ & $19537/13500$ & $793/750$ \\
$7/10$ & $1$ & $16963/10350$ & $\boldsymbol{14371/10350}$ & $\boldsymbol{12059/11500}$ \\
$8/9$ & $1$ & $\boldsymbol{15569/9500}$ & $13283/9500$ & $2559/2375$ \\
$1$ & $1$ & $7513/4500$ & $323/225$ & $203/180$ \\
\bottomrule
\end{tabular}
\end{center}

\needspace{9\baselineskip}
It remains to bound the probabilities of the three intervals for $\gamma$. Their values are $F(y_1)$, $F(y_2)-F(y_1)$, and $1-F(y_2)$. Applying the second inequality in~\eqref{eq:scalar-conditions} at $y_i$, for $i=1,2$, gives $F(y_i)\ge\beta-(\beta-1)/y_i$. These lower bounds prevent the distribution from placing too much probability on the later intervals, where the minimum $m_j$ is smaller. Since $m_1>m_2>m_3$, we obtain
\begin{align*}
 \sum_{i=1}^3\omega_i\Psi(y_i)
 &\ge m_1F(y_1)+m_2\bigl(F(y_2)-F(y_1)\bigr)+m_3\bigl(1-F(y_2)\bigr)\\
 &=m_3+(m_2-m_3)F(y_2)+(m_1-m_2)F(y_1)\\
 &\ge m_1-\Delta(\beta-1),
\end{align*}
where
\[
 \Delta=\frac37(m_1-m_2)+\frac18(m_2-m_3)
 =\frac37m_1-\frac{17}{56}m_2-\frac18m_3
 =\frac{16493839}{110124000}>0.
\]
The last inequality substitutes $F(y_1)\ge1-\frac37(\beta-1)$ and $F(y_2)\ge1-\frac18(\beta-1)$ (since $\beta-\frac{10}{7}(\beta-1)=1-\frac37(\beta-1)$ and $\beta-\frac98(\beta-1)=1-\frac18(\beta-1)$), using the positivity of $m_1-m_2$ and $m_2-m_3$.

The same combination of $\Psi(y_i)$ is at most $\beta$. Rearranging $\beta\ge m_1-\Delta(\beta-1)$ gives
\[
 \beta\ge\frac{m_1+\Delta}{1+\Delta}
       =\frac{196969687}{126617839}
       \approx1.555623509.
\]
This proves~\eqref{eq:lower-main} for arbitrary correlations among $\gamma,s,a$.
\end{document}